\title{%
	On the complexity of solving linear congruences and computing nullspaces modulo a constant
}
\author{%
	Niel de Beaudrap\thanks{
		This work was supported by the EC project QCS.}
	\\[1ex]
		\normalsize	DAMTP, Centre for Mathematical Sciences, University of Cambridge,	\\[-0.5ex]
		\normalsize	Wilberforce Road, Cambridge CB3 0WA, UK	
}
\date{%
	\normalsize
	5 August, 2013
}


\pdfoutput=1

\documentclass[draft,a4paper,10pt]{article}
\usepackage[utf8x]{inputenc}
\usepackage{amsmath,amssymb,amsthm,bm,bbm,mathtools}
\usepackage{scalefnt}
\usepackage{hyperref}
\usepackage{enumitem}
\usepackage[left=35mm,right=35mm]{geometry}

\hypersetup{
	pdftitle={On the complexity of solving linear congruences and computing nullspaces modulo a constant},
	pdfauthor={Niel de Beaudrap},
	colorlinks,%
	citecolor=black,%
	filecolor=black,%
	linkcolor=black,%
	urlcolor=black
}


\theoremstyle{definition}
\newtheorem{definition}{Definition}

\newtheorem*{problems}{Problems}

\theoremstyle{plain}
\newtheorem{theorem}{Theorem}
\newtheorem{lemma}[theorem]{Lemma}
\newtheorem{prop}[theorem]{Proposition}
\newtheorem{corollary}[theorem]{Corollary}

\newcommand\inter\cap

\renewcommand\subset\subseteq

\renewcommand\ge\geqslant
\renewcommand\le\leqslant

\renewcommand\vec\mathbf
\newcommand\unit[1][e]{\vec{\hat #1}}

\makeatletter

\newcommand\ox\otimes
\newcommand\x\times

\makeatletter
\newcommand\eg{\emph{e.g.}}
\newcommand\ie{\emph{i.e.}}
\newcommand\etc{\@ifnextchar.{\emph{etc}}{\emph{etc.}}}
\newcommand\etal{\@ifnextchar.{\emph{et al}}{\emph{et al.}}}
\makeatother

\DeclareMathOperator\dom{dom}
\DeclareMathOperator\Null{null}
\DeclareMathOperator\poly{poly}

\newcommand\N{\mathbb N}
\newcommand\Z{\mathbb Z}

\newcommand\F{\mathbb F}

\let\union\cup
\let\inter\cap

\newcommand\num[2][]{\ensuremath{\textbf{\upshape \#}{#2}_{#1}}}
\newcommand\FUL[1][]{\ensuremath{\mathsf{FUL}_{#1}}}
\newcommand\UL[1][]{\ensuremath{\mathsf{UL}}}

\renewcommand\P{\ensuremath{\mathsf P}}

\newcommand\co{\ensuremath{\mathsf{co}}}

\newcommand\Mod[1][]{\ensuremath{\mathsf{Mod}_{#1}}}
\newcommand\Log{\ensuremath{\mathsf{L}}}
\newcommand\NC{\ensuremath{\mathsf{NC}}}
\newcommand\AC{\ensuremath{\mathsf{AC}}}
\newcommand\Ceq{\ensuremath{\mathsf{C}_{\textbf=}}}
\newcommand\Gap{\ensuremath{\mathsf{Gap}}}

\newcommand\Poly{\ensuremath{\mathsf{poly}}}
\newcommand\Fun{\ensuremath{\mathsf{F}}}
\newcommand\Fundot{\ensuremath{\mathsf{F}{\cdot\:\!}}}

\newcommand\prob[1]{\mbox{\mdseries\upshape{\scalefont{0.95}{#1}}}}
\newcommand\bits[1][]{\prob{bits$(#1)$}}
\newcommand\LCON[1][]{\prob{LCON$_{#1}$}}
\newcommand\LCONNULL[1][]{\prob{LCONNULL$_{#1}$}}
\newcommand\LCONX[1][]{\prob{LCONX$_{#1}$}}

\newcommand\trans{^{\mathsf T}}
\newcommand\sur[1]{^{(#1)}}

\makeatletter
\renewcommand\paragraph{\@startsection{paragraph}{4}{\z@}%
                                    {3.25ex \@plus1ex \@minus.2ex}%
                                    {-1em}%
                                    {\normalfont\normalsize\itshape}}
\makeatother


\begin{document}
\maketitle

\begin{abstract}
	\noindent
		We consider the problems of determining the feasibility of a linear congruence, producing a solution to a linear congruence, and finding a spanning set for the nullspace of an integer matrix, where each problem is considered modulo an arbitrary constant $k \ge 2$.
		These problems are known to be complete for the logspace modular counting classes $\Mod[k]\Log = \co\Mod[k]\Log$ in special case that $k$ is prime~\cite{BDHM92}.
		By considering variants of standard logspace function classes --- related to \num\Log\ and functions computable by \UL\ machines, but which only characterize the number of accepting paths modulo $k$ --- we show that these problems of linear algebra are also complete for \co\Mod[k]\Log\ for any constant $k \ge 2$.

		Our results are obtained by defining a class of functions \FUL[k] which are low for \Mod[k]\Log\ and \co\Mod[k]\Log\ for $k \ge 2$, using ideas similar to those used in the case of $k$ prime in Ref.~\cite{BDHM92} to show closure of \Mod[k]\Log\ under $\NC^1$ reductions (including \Mod[k]\Log\ oracle reductions).
		In addition to the results above, we briefly consider the relationship of the class \FUL[k] for arbitrary moduli $k$ to the class \Fundot\co\Mod[k]\Log\ of functions whose output symbols are verifiable by \co\Mod[k]\Log\ algorithms; and consider what consequences such a comparison may have for oracle closure results of the form $\Mod[k]\Log^{\!\!\;\Mod[k]\Log} = \Mod[k]\Log$ for composite $k$.

		\medskip\noindent
		Keywords: Modular arithmetic, linear congruence, logarithmic workspace
\end{abstract}


\section{Introduction}

Solving a system of linear equations, or determining that it has no solution, is the definitive elementary problem of linear algebra over any ring.
This problem is the practical motivator of the notions of matrix products, inverses, and determinants, among other concepts; and relates to other computational problems of abelian groups, such as testing membership in a subgroup~\cite{MC87}.
Characterizing the complexity of this problem for common number systems, such as the integers, finite fields, or the integers modulo $k$ is therefore naturally of interest.
For an arbitrary constant $k \ge 2$, we consider the difficulty of \emph{deciding feasibility of linear congruences modulo $k$} (\LCON[k]) and \emph{computing solutions to linear congruences modulo $k$} (\LCONX[k]).
These are special cases of the problems \LCON\ and \LCONX\ defined by McKenzie and Cook~\cite{MC87}, in which $k \in O(n)$ is taken as part of the input and represented by its prime-power factors $p_1^{e_1} p_2^{e_2} \cdots p_\ell^{e_\ell}$.
Setting $k$ to a fixed constant is a natural, if slightly restrictive, special case.

Arvind and Vijayaraghavan~\cite{AV10} define the class $\Mod\Log \subset \NC^2$ as a logspace analogue the class \Mod\P\ defined by K\"obler and Toda~\cite{KT96}.
They show that \LCON\ is hard for \Mod\Log\ under \P-uniform $\NC^1$ reductions, and contained in $\Log^{\Mod\Log}/\Poly = \Log^{\num\Log}/\Poly$.
This is of course in contrast to the problem of determining integer-feasibility of integer matrix equations, which is at least as hard as computing greatest common divisors over $\Z$; the latter problem is not known to be in $\NC^j$ for any $j \ge 0$.

Buntrock~\etal~\cite{BDHM92} show --- for the special case of $k$ prime --- that determining the feasibility of systems of linear equations is complete for the complexity classes \Mod[k]\Log\ which generalize $\oplus\Log$, decidable by logspace nondeterministic Turing machines which can distinguish between having a number of accepting paths which are either zero or nonzero \emph{mod $k$}.
These results together with those of Ref.~\cite{AV10} suggest that the difficulty of solving linear equations over integer matrices is sensitive to the presence and the prime-power factorization of the modulus involved; one might suppose that \LCON[k] is particularly tractable for arbitrary $k \ge 2$.
For $k$ prime, Ref.~\cite{BDHM92} also shows that $\Mod[k]\Log = \co\Mod[k]\Log$: the techniques of Ref.~\cite{BDHM92} may more naturally be interpreted as proving that for $k$ prime, \LCON[k] is complete for \co\Mod[k]\Log.
Also implicit in Ref.~\cite{BDHM92} is that \LCON[k] is \co\Mod[k]\Log-hard for all $k \ge 2$ under $\NC^1$ reductions.
This suggests the question: for an \emph{arbitrary} modulus $k$, what is the precise relationship of \LCON[k] to the classes \co\Mod[k]\Log?


We show that the proof $\LCON \in \NC^3$ by McKenzie and Cook~\cite{MC87} may be adapted prove $\LCON[k] \in \co\Mod[k]\Log$, using fast parallel algorithms for matrix multiplication and computing rank modulo the prime factors of the modulus.
For a constant prime modulus $p$, the latter problems are complete for \co\Mod[p]\Log: however, as is typical of counting problems, they are ``evaluated'' in the number of accepting branches of the computation, which is an obstacle to performing operations such as integer division required by the McKenzie--Cook algorithm.
We overcome this obstacle by describing a class \FUL[k] of machines which evaluate functions on the work-tape, and which may be simulated in mod-logspace computations.\footnote{%
	For logspace nondeterministic machines, we adopt the Russo--Simon--Tompa oracle model~\cite{RST-1984}, in which nondeterministic machines are not allowed to make nondeterministic transitions while it writes on the oracle tape (\ie~oracle queries must be determined by the contents of the input and work tapes before the query has started being written).
}
This simulation uses techniques similar to those used Buntrock~\etal~\cite{BDHM92} for $p$ prime, to show closure of the class $\Mod[p]\Log$ under $\NC^1$ reductions.
We then describe, for $p^e$ a prime power, a \FUL[p^e] algorithm to solve the problem \LCONNULL[p^e] of computing a spanning set for a basis of the nullspace of a matrix modulo $p^e$.
It follows that \LCON[k] is \co\Mod[k]\Log-complete, and both \LCONX[k] and \LCONNULL[k] are \Fundot\co\Mod[k]\Log-complete (this class being the functional analogue of \co\Mod[k]\Log), for any constant $k \ge 2$.

Finally, for arbitrary moduli $k$, we consider the relationship of the class \FUL[k] to the function class \Fundot\co\Mod[k]\Log, and consider what insights it may suggest for oracle closure results of the form $\Mod[k]\Log^{\!\!\;\Mod[k]\Log} = \Mod[k]\Log$ for $k$ composite, where this problem remains open; and to consider what light it sheds on attempts to resolve it~\cite{Sz99}.

\section{Preliminaries}
\label{sec:preliminaries}

Throughout the following, $k \ge 2$ is a constant modulus, with a factorization into powers of distinct primes $k = p_1^{e_1} p_2^{e_2} \cdots p_\ell^{e_\ell}$.
When we consider the case of a modulus $k$ which is a prime power, we typically write $p^e$ instead, for $p$ some prime and $e \ge 1$ some positive integer.
We will suppose that the reader is familiar with the basic properties of the function classes \num\Log~\cite{BDHM92} and \Gap\Log~\cite{AO96}.

We consider the complexity of the following problems, which are named in analogy to problems considered by McKenzie and Cook~\cite{MC87}:
For an $m \x n$ integer matrix $A$ and vector $\vec y \in \Z^m$ provided as input, we define the following problems:
\begin{problems}
\item \textbf{\LCON[k]} ---
  Determine whether $A \vec x \equiv \vec y \pmod{k}$ has solutions for $\vec x \in \Z^n$.
\item \textbf{\LCONX[k]} ---
	Output a solution to the congruence $A \vec x \equiv \vec y \pmod{k}$, if one exists.
\item \textbf{\LCONNULL[k]} ---
	Output vectors $\vec x_1, \ldots, \vec x_N$ which span the solutions to $A \vec x \equiv \vec 0 \pmod{k}$.
\end{problems}
\noindent Without loss of generality, we may suppose $m = n$ by padding the matrix $A$.
We wish to describe how these problems relate to the classes \co\Mod[k]\Log\ for $k \ge 2$, which are the complements of the classes \Mod[k]\Log\ defined by Buntrock~\etal~\cite{BDHM92}.
(Because the classes \co\Mod[k]\Log\ are our principal concern, and because of the techniques used for completeness results in Ref.~\cite{BDHM92}, we will present the preliminary definitions and results which we use in terms of these classes, rather than the complementary classes \Mod[k]\Log.)

\begin{definition}
	The class \co\Mod[k]\Log\ (respectively \Mod[k]\Log) is the set of languages $L$ for which there exists $\varphi \in \num\Log$ such that $x \in L$ if and only if $\varphi(x) \equiv 0 \bmod{k}$ (respectively, $\varphi(x) \not\equiv 0 \bmod{k}$).
\end{definition}
\noindent
Note that the classes \Mod[k]\Log\ and \co\Mod[k]\Log\ remain the same if we substitute \num\Log\ in the definition above with \Gap\Log, as any function $g = f_1 - f_2 \in \Gap\Log$, for functions $f_1, f_2 \in \num\Log$, is congruent modulo $k$ to $f_1 + (k-1) f_2 \in \num\Log.$
The following results are a synopsis of (the remark which follows) Ref.~\cite[Theorem~10]{BDHM92}:
\begin{prop}
	\label{prop:coMod-det-itmat}
  We may characterize \co\Mod[k]\Log\ as the class of decision problems which are (logspace-uniform) $\NC^1$-reducible to verifying matrix determinants mod $k$, or verifying coefficients of integer matrix products/inverses mod $k$.
	(The corresponding \emph{falsification} problems are complete for \Mod[k]\Log.)
\end{prop}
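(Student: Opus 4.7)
The plan is to establish, for each of the three verification problems (a determinant, a specified coefficient of an integer matrix product, or an entry of a matrix inverse, all taken modulo $k$), both containment in $\co\Mod[k]\Log$ and $\co\Mod[k]\Log$-hardness under logspace reductions, following the approach of Ref.~\cite{BDHM92}.

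For the containment direction, I would first observe that $\det A = D_+(A) - D_-(A)$ for $\num\Log$-functions $D_\pm$ counting products contributed respectively by even and odd permutations. Modulo $k$ this gives $\det A \equiv D_+(A) + (k-1)\,D_-(A) \pmod{k}$, and $D_+ + (k-1)D_-$ is itself a $\num\Log$-function, since a nondeterministic logspace machine can branch over permutations and then replicate each odd-permutation accepting path $k-1$ times. Hence verifying $\det A \equiv b \pmod{k}$ reduces to testing whether $D_+(A) + (k-1)D_-(A) + (k-b) \equiv 0 \pmod{k}$, which places the problem in $\co\Mod[k]\Log$. The same signed-path technique applies to coefficients of an iterated matrix product $A_1 A_2 \cdots A_t$, by treating each layer of multiplication as a step in a layered graph with sign tracked through path-replication; and to entries of a matrix inverse via the adjugate identity, rephrasing the claim $(A^{-1})_{ij} \equiv b \pmod{k}$ as $(-1)^{i+j}\det(M_{ji}) \equiv b \cdot \det A \pmod{k}$, where $M_{ji}$ is obtained from $A$ by deleting row $j$ and column $i$; both sides are then $\num\Log$-expressible.

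For hardness, given an arbitrary $L \in \co\Mod[k]\Log$ with witness $\varphi \in \num\Log$, I would construct in logspace a layered configuration graph $G_x$ from the input $x$ whose number of start-to-accept paths equals $\varphi(x)$; this path count equals a distinguished coefficient of an iterated product $M_1 M_2 \cdots M_T$ of $0/1$ layer-transition matrices, each logspace-constructible from $x$, yielding the reduction to matrix-product coefficient verification. To reduce to a single determinant, I would use the standard trick of embedding path counts in a DAG into the determinant of a block-triangular construction whose off-diagonal block is the transition matrix, so that expanding the determinant recovers the target count up to sign. To reduce to a single entry of a matrix inverse, I would exploit the identity $(I - M)^{-1} = I + M + M^2 + \cdots$ valid whenever $M$ is strictly upper triangular, so that a specific entry of $(I-M)^{-1}$ exactly records the total number of paths in the associated acyclic graph.

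The principal obstacle is the non-field structure of $\Z/k\Z$ when $k$ is composite: the matrix-inverse reduction must not presuppose existence of $A^{-1}$ as a matrix over $\Z/k\Z$, so the verification is instead posed in the form $b \cdot \det A \equiv \mathrm{adj}(A)_{ij} \pmod{k}$, which is well defined regardless of whether $A$ is invertible modulo $k$. The recurring sign-adjustment used to realize determinants, matrix-product coefficients, and adjugate entries uniformly as $\num\Log$ functions modulo $k$ is the common technical device that ties all three characterizations together and, combined with the matrix-product reduction from generic $\num\Log$ path-counting, makes the three problems mutually logspace-reducible.
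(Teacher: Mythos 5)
The paper does not actually prove this proposition: it is stated as a synopsis of Theorem~9 of Buntrock~\etal~\cite{BDHM92} and used as a black box. Measured against the intended argument, your hardness direction is essentially right and standard: the layered configuration graph turns an arbitrary $\num\Log$ path count into a coefficient of an iterated product of logspace-constructible $0/1$ matrices, and the identities $(I-M)^{-1} = I + M + \cdots$ for nilpotent $M$ and the block-triangular determinant embedding transfer this to the other two problems; your observation that inverse verification must be posed via the adjugate, $b\cdot\det A \equiv \mathrm{adj}(A)_{ij} \pmod{k}$, to avoid presupposing invertibility over $\Z/k\Z$, is also the right move.

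The containment direction, however, has a genuine gap at the determinant. You claim that $D_\pm(A)$, the sums of $\prod_i A_{i\sigma(i)}$ over even and odd permutations, are $\num\Log$ functions because ``a nondeterministic logspace machine can branch over permutations.'' It cannot: after guessing $\sigma(1),\ldots,\sigma(n)$ one symbol at a time, the machine must verify injectivity, which requires remembering which of the $n$ values have been used --- $\Omega(n)$ bits, not $O(\log n)$. Dropping the check makes the machine sum over all functions $[n]\to[n]$, which does not give the determinant; and the literal permutation sums are almost certainly not in $\num\Log$ at all, since $D_+ + D_-$ is the permanent, which is $\#\P$-hard, while $\num\Log$ functions are computable in $\NC[2]$. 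The determinant \emph{is} a $\Gap\Log$ function, but establishing this needs a different device: either the clow-sequence expansion of Mahajan and Vinay, in which the spurious non-permutation terms cancel in signed pairs over $\Z$ (hence modulo $k$), or Berkowitz's reduction of the characteristic polynomial to an iterated product of logspace-constructible matrices, which then feeds into exactly the layered-graph path-counting argument you already apply correctly to matrix products. The same repair is needed wherever you invoke determinants of minors for the adjugate entries. With that substitution the rest of your argument goes through.
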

\begin{prop}
	\label{prop:LCONprime}
  For $p$ prime, \LCON[p] is complete for \co\Mod[p]\Log\ under $\NC^1$ reductions.
\end{prop}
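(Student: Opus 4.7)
Since $p$ is prime, $\GF(p) = \Z/p\Z$ is a field, so I may lean on standard linear-algebraic equivalences and prove containment and hardness separately.

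For \emph{containment} of \LCON[p] in \co\Mod[p]\Log, I would use that $A\vec x \equiv \vec y \pmod{p}$ is feasible if and only if $\mathrm{rank}_p(A\mid\vec y) = \mathrm{rank}_p A$.  Adjoining the single column $\vec y$ can increase the rank by at most one, so feasibility is equivalent to the assertion that for \emph{no} integer $r \ge 1$ do we simultaneously have (i) every $r \times r$ minor of $A$ vanishing mod $p$, and (ii) some $r \times r$ minor of $(A \mid \vec y)$ involving the column $\vec y$ being nonzero mod $p$.  By Proposition~\ref{prop:coMod-det-itmat}, verifying that a single determinant vanishes mod $p$ is a \co\Mod[p]\Log\ predicate; and via standard closure properties (for instance, Fermat's little theorem $\phi^{p-1} \pmod p \in \{0, 1\}$ lets one combine polynomially many \num\Log\ functions into one whose vanishing mod $p$ encodes a conjunction of ``vanishes mod $p$'' conditions) one packages the quantified statement above as a single \co\Mod[p]\Log\ predicate.

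For \emph{hardness}, Proposition~\ref{prop:coMod-det-itmat} reduces matters to showing that deciding ``$\det M \equiv 0 \pmod p$'' on an $n \times n$ integer matrix $M$ reduces to \LCON[p].  Given such $M$, I would form the system whose coefficient matrix $\tilde A$ is a block-diagonal stack of $n$ copies of $M\trans$, followed by one extra row whose only nonzero entries are a single $1$ at position $(i,i)$ of the $i$th column block for each $i \in \{1, \dots, n\}$, with target vector $\vec y = (\vec 0, \dots, \vec 0, 1)\trans$.  A solution is a tuple $(\vec z_1, \dots, \vec z_n)$ satisfying $M\trans \vec z_i \equiv \vec 0 \pmod p$ for every $i$ and $\sum_i (\vec z_i)_i \equiv 1 \pmod p$; such a tuple exists iff $M\trans$ has a nontrivial kernel mod $p$ (scale any nonzero kernel vector to have some $j$-th component equal to $1$ and place it in the $j$-th block, with all other blocks zero), which is exactly the condition $\det M \equiv 0 \pmod p$.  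After padding $\tilde A$ with a zero column to make it square, this is a logspace reduction.

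The main obstacle is the containment direction: although Proposition~\ref{prop:coMod-det-itmat} makes individual determinant checks tractable, feasibility of $A \vec x \equiv \vec y \pmod p$ demands a quantified combination of ``all $r \times r$ minors of $A$ vanish mod $p$'' with ``some $r \times r$ minor of $(A \mid \vec y)$ involving the column $\vec y$ does not vanish mod $p$''.  Repackaging this as one \co\Mod[p]\Log\ predicate requires the appropriate closure of \num\Log\ under polynomially bounded conjunctions --- available via the Fermat manipulation noted above for odd $p$, and directly for $p = 2$ since $\co\oplus\Log$ is closed under complement --- together with the usual logspace-uniform bookkeeping.  The hardness direction, by comparison, is a straightforward gadget.
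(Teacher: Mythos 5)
The paper gives no proof of this proposition at all --- it is stated as a synopsis of Theorem~9 of Buntrock~\etal~\cite{BDHM92} --- so your attempt is being measured against what a correct proof would require rather than against a written argument in the text. Your hardness gadget is fine: stacking $n$ copies of $M\trans$ and demanding $\sum_i (\vec z_i)_i \equiv 1$ correctly captures ``$M\trans$ has a nontrivial kernel mod $p$'', since any nonzero kernel vector has some nonzero coordinate $j$ and can be normalized and placed in the $j$\textsuperscript{th} block, while an invertible $M\trans$ forces every block to zero; and singularity testing mod $p$ is indeed \co\Mod[p]\Log-hard via Proposition~\ref{prop:coMod-det-itmat}.

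The containment direction, however, has a genuine gap. You characterize $\mathrm{rank}_p$ via vanishing of $r \x r$ minors and then claim the resulting statement is a ``polynomially bounded conjunction'' of determinant tests amenable to the Fermat/closure trick. But an $n \x n$ matrix has $\binom{n}{r}^2$ minors of order $r$, which is exponential in $n$ for $r \approx n/2$; a logspace machine cannot even write down the index of a single such minor (that takes $\Theta(n\log n)$ bits), let alone iterate over all of them, so neither Proposition~\ref{prop:closedIntersection} nor the $1 - \prod_i(1-\phi_i^{p-1})$ construction applies. The conjunction you need is not polynomially bounded, and this is precisely the step your plan does not survive. The standard repair --- and the one the paper itself invokes in Section~\ref{sec:reduction} for the nullspace computation --- is to avoid minors entirely: compute $\mathrm{rank}_p$ via Mulmuley's reduction to characteristic polynomials over $\F_p(\tau)$~\cite{Mulm87}, in turn reduced to iterated matrix products by Berkowitz~\cite{Berk84}, so that each rank test ``$\mathrm{rank}_p(A) \le r$'' becomes a \emph{single} \co\Mod[p]\Log\ predicate. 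Feasibility is then the conjunction over the polynomially many values of $r$ of ``not ($\mathrm{rank}_p(A) < r$ and $\mathrm{rank}_p(A\mid\vec y) \ge r$)'', which \emph{is} handled by Propositions~\ref{prop:closedIntersection} and~\ref{prop:complements}. With that substitution your outline goes through; as written, it does not.
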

\noindent
Buntrock~\etal\ also characterize the classes \co\Mod[k]\Log\ in terms of the prime factors of $k$, and show closure results which will prove useful.
The following are equivalent to Lemma~6, Theorem~7, and Corollary~8 of Ref.~\cite{BDHM92} via logical complementation:
\begin{prop}[normal form]
	\label{prop:normalForm}
  Let $k = p_1^{e_1} p_2^{e_2} \cdots p_\ell^{e_\ell}$ be the factorization of $k \ge 2$ into prime powers $p_j^{e_j}$.
	Then $L \in \co\Mod[k]\Log$ if and only if there are languages $L_j \in \co\Mod[p_j]\Log$ such that $L = L_1 \inter \cdots \inter L_\ell$.
	In particular, $\co\Mod[k]\Log = \co\Mod[p_1 p_2 \cdots p_\ell]\Log$.
\end{prop}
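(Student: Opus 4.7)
The plan is to treat the two directions separately: the Chinese Remainder Theorem reduces the modulo-$k$ question to a conjunction of modulo-$p_j^{e_j}$ questions, and then Lucas's theorem collapses each $\co\Mod[p_j^{e_j}]\Log$ down to $\co\Mod[p_j]\Log$.  The ``in particular'' equality $\co\Mod[k]\Log = \co\Mod[p_1 \cdots p_\ell]\Log$ then falls out as a by-product of the construction used in the converse.

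\emph{Forward direction.}  Suppose $L \in \co\Mod[k]\Log$ is witnessed by some $\varphi \in \num\Log$.  By CRT, $\varphi(x) \equiv 0 \pmod{k}$ iff $\varphi(x) \equiv 0 \pmod{p_j^{e_j}}$ for every $j$, so $L = L'_1 \inter \cdots \inter L'_\ell$ where $L'_j := \{x : \varphi(x) \equiv 0 \pmod{p_j^{e_j}}\} \in \co\Mod[p_j^{e_j}]\Log$.  It therefore suffices to show $\co\Mod[p^e]\Log \subseteq \co\Mod[p]\Log$ for each constant prime power $p^e$.  Here I would invoke Lucas's theorem, which yields $\binom{n}{p^i} \equiv n_i \pmod{p}$ with $n_i$ the $i$-th base-$p$ digit of $n$; hence $p^e \mid n$ iff $\binom{n}{p^i} \equiv 0 \pmod{p}$ for all $0 \le i < e$.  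Since $\num\Log$ is closed under constant-sized binomials (so that $\binom{\varphi(x)}{p^i} \in \num\Log$ whenever $\varphi \in \num\Log$), this expresses each $L'_j$ as the intersection of $e_j$ languages already in $\co\Mod[p_j]\Log$.  The class $\co\Mod[p]\Log$ is in turn closed under finite intersection: given $\psi_1, \psi_2 \in \num\Log$ witnessing $L_1, L_2 \in \co\Mod[p]\Log$, Fermat's little theorem gives that $\psi_1^{p-1} + \psi_2^{p-1}$ is again in $\num\Log$ and vanishes mod $p$ iff both $\psi_1$ and $\psi_2$ do.

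\emph{Converse direction.}  Suppose $L = L_1 \inter \cdots \inter L_\ell$ with each $L_j \in \co\Mod[p_j]\Log$ witnessed by some $\varphi_j \in \num\Log$.  Set $q := p_1 p_2 \cdots p_\ell$ and, for each $j$, choose a positive integer $r_j$ satisfying $r_j (q/p_j) \equiv 1 \pmod{p_j}$.  Define
\begin{equation*}
	\psi(x) \;:=\; \frac{k}{q}\, \sum_{j=1}^{\ell} r_j\,\frac{q}{p_j}\, \varphi_j(x),
\end{equation*}
a non-negative integer linear combination of $\num\Log$ functions and hence itself in $\num\Log$.  A short CRT calculation shows that reducing $\psi$ modulo $p_j^{e_j}$ kills every term with index $i \ne j$ (each such term is divisible by $p_j^{e_j}$, since $(k/q)$ contributes a factor $p_j^{e_j-1}$ and $q/p_i$ contributes a further factor $p_j$) and leaves the $j$-th term congruent to $p_j^{e_j-1} u_j r_j \varphi_j(x)$ with $u_j$ coprime to $p_j$; hence $\psi(x) \equiv 0 \pmod{k}$ iff every $\varphi_j(x) \equiv 0 \pmod{p_j}$, which is exactly the condition $x \in L$.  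So $L \in \co\Mod[k]\Log$.  Running the same construction with $k$ replaced by $q$ also places $L$ in $\co\Mod[q]\Log$, yielding $\co\Mod[k]\Log = \co\Mod[p_1 \cdots p_\ell]\Log$.

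The main technical hurdle I expect is verifying that $\binom{\varphi(x)}{p^i} \in \num\Log$ for constant $p^i$.  The standard route is to build a new NL machine with $p^i$ separate nondeterministic-certificate tapes for the machine underlying $\varphi$, and to scan them in parallel using a single shared logspace position counter, accepting when the corresponding computations are all accepting, pairwise distinct, and lex-ordered; the remaining closure facts for $\num\Log$ (constant-fold products, multiplication by positive constants, and sums) are routine.  Everything else in the argument reduces to elementary CRT arithmetic together with Fermat's little theorem.
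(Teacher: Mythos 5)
Your route is essentially the standard one from Buntrock~\etal~\cite{BDHM92}, which is also where the paper sources this proposition --- it is stated as implicit in their Lemma~5, Theorem~6 and Corollary~7 and is not reproved in the text --- so there is no in-paper proof to diverge from: CRT splits the modulus into its prime-power parts, the Lucas/binomial-coefficient trick collapses $\co\Mod[p^e]\Log$ to $\co\Mod[p]\Log$, and a weighted sum recombines the pieces. Your converse direction is correct as written (the divisibility bookkeeping for $\psi$ checks out, and substituting $q = p_1\cdots p_\ell$ for $k$ does give $\co\Mod[k]\Log = \co\Mod[q]\Log$), and the claim that $\binom{\varphi(x)}{c} \in \num\Log$ for constant $c$ is indeed established by the lockstep simulation of $c$ lex-ordered distinct paths that you sketch.

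There is one genuine error, in the intersection gadget: $\psi_1^{p-1} + \psi_2^{p-1}$ does \emph{not} vanish mod $p$ only when both summands do once $p = 2$, since $\psi_1 + \psi_2$ is even whenever both $\psi_1$ and $\psi_2$ are odd; the same failure occurs for any $p$ as soon as you sum $p$ or more indicator terms, which is relevant because the Lucas step asks you to intersect $e_j$ languages. As $k$ even is not an excludable case, this must be repaired. The standard fix is to simulate conjunction over $\F_p$ by inclusion--exclusion: with $a_i = \psi_i^{p-1} \in \{0,1\} \bmod p$, take $\psi = a_1 + a_2 + (p-1)a_1a_2 \equiv a_1 + a_2 - a_1a_2 \pmod p$, which vanishes iff $a_1 = a_2 = 0$, and iterate pairwise; or handle all $e_j$ languages at once via $1 + (p-1)\prod_i\bigl(1 + (p-1)a_i\bigr)$. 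These are non-negative integer polynomial combinations, so they remain in \num\Log. (Alternatively, you could simply invoke Proposition~\ref{prop:closedIntersection}, which the paper imports from the same source.) With that one line corrected, the argument is complete.
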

\begin{prop}[closure under intersections]
  \label{prop:closedIntersection}
	For any $k \ge 2$ and languages $L, L' \in \co\Mod[k]\Log$, we have $L \inter L' \in \co\Mod[k]\Log$.
\end{prop}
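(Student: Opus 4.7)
The plan is to reduce to the case where $k = p$ is prime via Proposition~\ref{prop:normalForm}, and then exhibit an explicit $\num\Log$ witness for the intersection.

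For the reduction, let $L, L' \in \co\Mod[k]\Log$.  By Proposition~\ref{prop:normalForm} we may write $L = L_1 \inter \cdots \inter L_\ell$ and $L' = L'_1 \inter \cdots \inter L'_\ell$ with $L_j, L'_j \in \co\Mod[p_j]\Log$, where $p_1, \ldots, p_\ell$ are the distinct prime factors of $k$.  Then $L \inter L' = \bigcap_j (L_j \inter L'_j)$; provided closure under intersection holds for each $\co\Mod[p_j]\Log$, the converse direction of Proposition~\ref{prop:normalForm} yields $L \inter L' \in \co\Mod[k]\Log$.

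It remains to handle $k = p$ prime.  Let $\varphi, \varphi' \in \num\Log$ witness $L, L'$.  Using closure of $\num\Log$ under multiplication (obtained from sequentially composed nondeterministic simulations), the normalized functions $\bar\varphi := \varphi^{p-1}$ and $\bar\varphi' := (\varphi')^{p-1}$ lie in $\num\Log$, and by Fermat's little theorem they take values in $\{0, 1\}$ modulo $p$, with $\bar\varphi(x) \equiv 0 \pmod{p}$ iff $x \in L$ (and similarly for $\bar\varphi'$).  I would then define
\[
	\psi(x) \;:=\; \bar\varphi(x) \;+\; \bar\varphi'(x) \;+\; (p-1)\,\bar\varphi(x)\,\bar\varphi'(x),
\]
which belongs to $\num\Log$ by closure under sum and product (the multiplication by $p-1$ realized as $p-1$ parallel branches).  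Viewed as a polynomial in $\{0,1\}$-valued inputs modulo $p$, $\psi$ computes the logical disjunction $\bar\varphi(x) \lor \bar\varphi'(x)$, so $\psi(x) \equiv 0 \pmod{p}$ exactly when both $\bar\varphi(x)$ and $\bar\varphi'(x)$ vanish modulo $p$, i.e., exactly when $x \in L \inter L'$.

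The main technical point is that $\num\Log$ lacks subtraction, so the disjunction polynomial must be written with non-negative coefficients: for $p > 2$ the sum $\bar\varphi + \bar\varphi'$ already suffices (its value modulo $p$ falls in $\{0, 1, 2\}$), but the cross-term $(p-1)\bar\varphi\bar\varphi'$ is needed to handle $p = 2$ uniformly.  No other substantial obstacles arise.
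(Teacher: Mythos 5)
Your proof is correct. The paper offers no proof of this proposition of its own---it is cited as implicit in Buntrock \etal~\cite{BDHM92}---and your argument (decompose via the normal form of Proposition~\ref{prop:normalForm} to reduce to a prime modulus $p$, normalize the \num\Log\ witnesses to $\{0,1\}$-values via Fermat's little theorem, and combine them with the non-negative disjunction polynomial $a+b+(p-1)ab$) is the standard route and is carried out correctly, including the observation that the cross-term is only essential for $p=2$.
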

\begin{prop}[limited closure under complements]
  \label{prop:complements}
	For any prime $p$ and $e \ge 1$, we have $\co\Mod[p^e]\Log = \co\Mod[p]\Log = \Mod[p]\Log = \Mod[p^e]\Log$.
\end{prop}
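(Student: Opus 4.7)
The plan is to establish three containments; combined with the easy reverse containments, they yield all four equalities. The steps are: (a) $\Mod[p]\Log = \co\Mod[p]\Log$; (b) $\Mod[p]\Log \subseteq \Mod[p^e]\Log$ and $\co\Mod[p]\Log \subseteq \co\Mod[p^e]\Log$; and (c) the reverse containments $\Mod[p^e]\Log \subseteq \Mod[p]\Log$ and $\co\Mod[p^e]\Log \subseteq \co\Mod[p]\Log$. Steps (a) and (b) follow from Fermat- and Euler-style tricks inside $\num\Log$; the real work is (c).

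For (a), given $\varphi \in \num\Log$ with $x \in L$ iff $\varphi(x) \not\equiv 0 \pmod{p}$, Fermat's little theorem forces $\varphi^{p-1}$ to equal $0$ or $1$ modulo $p$ according as $p \mid \varphi(x)$ or not. Thus $\psi = 1 + (p-1)\varphi^{p-1}$ lies in $\num\Log$ (by closure under addition, multiplication, and constant positive coefficients), and satisfies $\psi \equiv 0 \pmod{p}$ iff $\varphi \not\equiv 0 \pmod{p}$, placing $L$ in $\co\Mod[p]\Log$; the reverse direction is symmetric. For (b), take $\psi = \varphi^{(p-1)p^{e-1}}$: by Euler's theorem $\psi \equiv 1 \pmod{p^e}$ whenever $p \nmid \varphi$, while if $p \mid \varphi$ then $\psi$ is divisible by $p^{(p-1)p^{e-1}}$, and since $(p-1)p^{e-1} \ge e$ for every $p \ge 2$, $e \ge 1$, we get $\psi \equiv 0 \pmod{p^e}$ in that case. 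The $\co$-analogue uses the same $\psi$.

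For the key step (c) the plan is to access the base-$p$ digits of $\varphi(x)$ via Lucas's theorem: writing $\varphi(x) = \sum_{i \ge 0} \varphi_i(x)\, p^i$ with $0 \le \varphi_i < p$, Lucas gives $\binom{\varphi}{p^\ell} \equiv \varphi_\ell \pmod{p}$. For each constant $k = p^\ell$ with $\ell < e$, the function $\binom{\varphi(x)}{k}$ itself belongs to $\num\Log$: given an NTM $M$ for $\varphi$, construct an NTM $M'$ that guesses $k$ computation paths of $M$ in parallel, interleaving their bit-by-bit nondeterministic choices, while maintaining $k-1$ constant-size comparators tracking lex order between successive guessed paths, and accepts iff every simulated copy of $M$ accepts and the paths are strictly lex-increasing. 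Since $k$ is a fixed constant and $M$ is logspace, $M'$ runs in $O(\log n)$ workspace, and its accepting computations are in bijection with $k$-subsets of accepting paths of $M$, yielding exactly $\binom{\varphi(x)}{k}$.

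Putting this together, $\varphi \equiv 0 \pmod{p^e}$ iff every $\varphi_\ell$ with $\ell < e$ vanishes. Using $\binom{\varphi}{p^\ell}^{p-1} \equiv 1 \pmod{p}$ exactly when $\varphi_\ell \not\equiv 0$, this is the same as
\[
	\psi \;:=\; 1 - \prod_{\ell=0}^{e-1}\!\Bigl(1 - \tbinom{\varphi}{p^\ell}^{p-1}\Bigr) \;\equiv\; 0 \pmod{p}.
\]
Expanding the product and replacing each $-1$ by $p-1$ presents $\psi$ as a sum of products of $\num\Log$ functions with nonnegative integer coefficients in $\{0,1,\ldots,p-1\}$, hence $\psi \in \num\Log$; this single $\psi$ witnesses both halves of (c) simultaneously. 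The main obstacle will be the logspace construction of $\binom{\varphi}{p^\ell}$: verifying that the parallel simulation of $p^\ell$ copies of $M$ together with the online lex comparisons between successive guessed paths genuinely fits in $O(\log n)$ workspace, which rests on $p^\ell$ being a constant (because $e$ is) and on the fact that lex comparison of bit strings can be performed incrementally as the bits are being guessed.
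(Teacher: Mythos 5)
Your proof is correct, and it is essentially the argument of Buntrock \etal~\cite{BDHM92}, which this paper simply cites for the proposition rather than reproving. Both the Fermat/Euler manipulations and the key step --- extracting the base-$p$ digits of $\varphi$ via Lucas's theorem applied to $\binom{\varphi}{p^\ell}$, relying on closure of \num\Log\ under binomial coefficients with constant lower index (your lockstep $k$-fold simulation with constant-size lexicographic comparators is the standard way to establish that closure) --- match the route taken in the cited source.
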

\noindent
By the Chinese Remainder Theorem, a system of linear congruences mod~$k$ has solutions if and only if it has solutions modulo each prime power divisor $p_j^{e_j}$ of $k$.
We then have $\LCON[k] \in \co\Mod[k]\Log$ if and only if $\LCON[p^e] \in \co\Mod[p^e]\Log = \co\Mod[p]\Log$ by Proposition~\ref{prop:normalForm}.
(In fact, this suffices to show that $\LCON[k] \in \co\Mod[k]\Log$ for all $k$ square-free.)

We see from Propositions~\ref{prop:LCONprime} and~\ref{prop:complements} that the case of a prime modulus is special.
For $p$ prime, Buntrock~\etal\ also implicitly characterize the complexity of \LCONX[p] and \LCONNULL[p]\,.
We may describe the complexity of these function problems as follows.
For a function $f(x): \Sigma^\ast \to \Sigma^\ast$ and $x \in \Sigma^\ast$, let $|f(x)|$ denote the length of the representation of $f(x)$; and let $f(x)_j$ denote the $j\textsuperscript{th}$ symbol in that representation.
Following Hertrampf, Reith and Vollmer~\cite{HRV00}, for a function $f: \Sigma^\ast \to \Sigma^\ast$ on some alphabet $\Sigma$, and for some symbol $\bullet \notin \Sigma$, we may define the decision problem
\begin{equation}
		\bits[f] = \left\{ (x,j,b) \;\left|\;
			\begin{array}{r@{}l@{~\text{and}~}r@{}l}
				\text{either}~	j&{}\le |f(x)|	&	b&{}=f(x)_j		\\
				\text{or}~			j&{}>|f(x)|			&	b&{}= \bullet
			\end{array}
			\right\}\right..
\end{equation}
Abusing notation, we write $f(x)_j = \bullet$ in case $|f(x)| < j$.
We extend this definition to \emph{partial} functions $f$ by asserting $(x,j,b) \in \bits[f]$ only if $x \in \dom(f)$.
\begin{definition}
	The class $\Fundot\co\Mod[k]\Log$ is the set of (partial) functions $f$ such that $|f(x)| \in \poly(|x|)$ for all $x \in \dom(f)$, and for which $\bits[f] \in \co\Mod[k]\Log$. (We define the class \Fun\Mod[k]\Log\ similarly.)
\end{definition}
\noindent Then Ref.~\cite[Theorem~10]{BDHM92} also implicitly shows:
\begin{prop}
	\label{prop:LCONX-LCONNULL-FunMod}
  For $p$ prime, the problems \LCONX[p] and \LCONNULL[p] are complete for $\Fun\Mod[p]\Log = \Fundot\co\Mod[p]\Log$ under $\NC^1$ reductions.
\end{prop}

\section{Natural function classes for modular logspace}
\label{sec:modLogspaceFunClasses}

We introduce two complexity classes in logarithmic space: a modular analogue of \num\Log, and a class of function problems which is naturally low for \Mod[k]\Log\ and \co\Mod[k]\Log.
We describe the relationships of these classes to \Fun\Mod[k]\Log\ and \Fundot\co\Mod[k]\Log, and to each other in the case of a prime modulus.
Some of these results may be regarded as encapsulating known techniques; we present them explicitly to simplify the presentation of the main results of the article.

The first class we define is a logspace variant of the class \num[k]\P\ described by Valiant~\cite[page~193]{Valiant79}:

\begin{definition}
	\label{def:numkLog}
  The function class $\num[k]\Log$ is the set of functions $f: \Sigma^\ast \to \{0,1,\ldots,k-1\}$ such that there exists $\varphi \in \num\Log$ such that $f(x) \equiv \varphi(x) \pmod{k}$.
\end{definition}

\noindent
Note that \num[k]\Log\ is closed under addition, multiplication, and constant powers \emph{modulo $k$} by virtue of similar closure results for \num\Log\ over the integers; it is closed under subtraction mod $k$ as well, as $M-N \equiv M + (k-1)N \pmod{k}$.
Thus, if we were to define a similar class \Gap[k]\Log\ in terms of congruence mod $k$ to functions $g \in \Gap\Log$, we would obtain $\Gap[k]\Log = \num[k]\Log$.
Note that by its definition, the functions $f \in \num[k]\Log$ are essentially those functions whose values a \co\Mod[k]\Log\ algorithm can verify directly:
\begin{lemma}
	\label{lemma:numkLogInFun}
  For any $k \ge 2$, $\num[k]\Log \subset \Fundot\co\Mod[k]\Log$.
\end{lemma}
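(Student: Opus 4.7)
The plan is to verify directly the two defining conditions of $\Fundot\co\Mod[k]\Log$ for a given $f \in \num[k]\Log$: that $|f(x)| \in \poly(|x|)$, and that $\bits[f] \in \co\Mod[k]\Log$. The first is essentially free: since $f$ maps into $\Z/k\Z$, a set of constant size $k$, we can represent $f(x)$ by a single symbol over a $k$-ary output alphabet (or by $\lceil \log_2 k \rceil = O(1)$ bits). In particular, $|f(x)|$ is bounded by a constant independent of $|x|$, so polynomial bounding is automatic.

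For the main condition, I would fix some $\varphi \in \num\Log$ with $f(x) \equiv \varphi(x) \pmod{k}$ and an NTM $N_\varphi$ counting it, and then build a single NTM $N$ taking inputs $(x,j,b)$ whose number of accepting paths is $\equiv 0 \pmod{k}$ exactly when $(x,j,b) \in \bits[f]$. The machine $N$ starts with a logspace case analysis on $(j,b)$ (both small or constant-sized): if $j$ exceeds the constant length of $f(x)$ and $b = \bullet$, then $(x,j,b) \in \bits[f]$ and $N$ deterministically spawns $k$ accepting paths; if $j$ is a valid index and $b$ encodes some residue in $\{0,1,\ldots,k-1\}$, then $N$ forks into two siblings, one of which simulates $N_\varphi$ on $x$ (contributing $\varphi(x)$ accepting paths) while the other deterministically produces $(k-b) \bmod k$ further accepting paths, so that the total count is $\equiv \varphi(x) - b \pmod{k}$; and in any remaining ill-formed case, $N$ accepts along a single path, giving count $1 \not\equiv 0 \pmod{k}$.

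Correctness follows immediately from the case split: the first case contributes $k \equiv 0 \pmod{k}$, as required; the second case gives count $\equiv 0 \pmod k$ iff $\varphi(x) \equiv b \pmod k$, \ie\ iff $f(x) = b$ at position $j=1$; and the third case produces $1 \not\equiv 0 \pmod{k}$, flagging non-membership in $\bits[f]$. Hence the function $x \mapsto$ (accepting paths of $N$) lies in $\num\Log$ and witnesses $\bits[f] \in \co\Mod[k]\Log$.

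There is essentially no hard step here; the argument is a direct unpacking of the definitions. The only place where constancy of $k$ is used in a non-trivial way is in producing the residue-correcting offset of $(k-b) \bmod k$ extra accepting paths in a sibling subcomputation: this is effortless when $k$ is fixed and $b$ ranges over $k$ possibilities, but would need genuine care if $k$ were part of the input. Given the definitions, the lemma follows.
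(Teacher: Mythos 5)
Your proof is correct and follows essentially the same route as the paper, which treats the lemma as immediate from the observation that deciding $\bits[f]$ for $f\in\num[k]\Log$ amounts to testing $\varphi(x)-b\equiv 0\pmod{k}$, exactly the kind of comparison a \co\Mod[k]\Log\ machine performs; your explicit machine with the $(k-b)\bmod k$ offset just spells this out. One small caution: stick with the single $k$-ary output symbol rather than the parenthetical binary encoding, since deciding an individual \emph{bit} of $\varphi(x)\bmod k$ would require a disjunction over residues, and \co\Mod[k]\Log\ is not known to be closed under union for composite $k$.
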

\begin{proof}
  For $f \in \num[k]\Log$ such that $f: \Sigma^\ast \to \{0,\ldots,k-1\}$, let $\mathbf T$ be a nondeterministic Turing machine which accepts on inputs $x$ with some number of branches $\varphi(x) \equiv f(x) \pmod{k}$.
	Consider a nondeterministic logspace machine $\mathbf T'$ acting on the alphabet $\Sigma^\ast \cup \{0,\ldots,k-1\}$ which takes input tuples $(x,y) \in \Sigma^\ast \x \{0,\ldots,k-1\}$.
	The machine $\mathbf T'$ reads $y$ and branches $(k-1)y + 1$ times.
	In one of these branches, $\mathbf T'$ simulates $\mathbf T$ on $x$, accepting if and only if $\mathbf T$ does; in the other branches, it accepts unconditionally.
	The number of accepting branches is then $\varphi(x) + (k-1)y \equiv f(x) - y \pmod{k}$, so that $\mathbf T'$ accepts with $0$ branches mod $k$ on input $(x,y)$ if and only if $f(x) = y$.
	Thus $\bits[f] \in \co\Mod[k]\Log$.
\end{proof}
\noindent
The technique here is identical to that of Buntrock~\etal~\cite{BDHM92}; one might describe Ref.~\cite[Theorem~10]{BDHM92} as showing that evaluating matrix determinants modulo $k$, and evaluating coefficients of products/inverses of integer matrices modulo $k$, are contained in \num[k]\Log.

We are interested in logspace machines which compute \num[k]\Log\ functions on their output tapes.
We will be interested in a particular sort of nondeterministic logspace machine which is suitable for performing computations as subroutines of \co\Mod[k]\Log\ machines: the main result of this section is to describe conditions under which it can compute functions in \num[k]\Log.
\begin{definition}
	\label{def:FUL}
	A \emph{\FUL[k] machine computing a (partial) function $f$} is a nondeterministic logspace Turing machine which 
	\textbf{(a)}~for inputs $x \in \dom(f)$, computes $f(x)$ on its output tape in some number $\varphi(x,f(x)) \equiv 1 \pmod{k}$ of its accepting branches, and \textbf{(b)}~for each $y \ne f(x)$ (or for any string $y$, in the case $x \notin \dom(f)$), computes $y$ on its output tape on some number $\varphi(x,y) \equiv 0 \pmod{k}$ of its accepting branches.
	We say that $f \in \FUL[k]$ if there exists a \FUL[k] machine which computes $f$.
\end{definition}
If we replace the relation of equivalence modulo $k$ with equality in the definition of \FUL[k] above, we obtain the class \FUL\ of functions computable by nondeterministic logspace machines with a single accepting branch.
This is in turn analogous to the class $\mathsf{UPF}$ described by Beigel, Gill, and Hertrampf~\cite{BGH90}, of functions which may be computed by a nondeterministic polynomial time Turing machine without affecting the number of accepting branches of that machine.
Note that for a \FUL[k] machine $\mathbf U$, what is written on the output tape in many branches (perhaps even the vast majority of them) may not be the function $f(x)$ which $\mathbf U$ ``computes''; but as any string $y \ne f(x)$ occurs with multiplicity a multiple of $k$, the branches containing such $y$ cannot affect the number of accepting branches modulo $k$ of any machine simulating $\mathbf U$ as a subroutine.
In a \Mod[k]\Log\ or \co\Mod[k]\Log\ algorithm, such ``incorrect results'' occur in effect with measure zero. 

In this sense, the closure result $\Mod[p]\Log^{\Mod[p]\Log} = \Mod[p]\Log$ for $p$ prime which is implicit in Ref.~\cite{BDHM92} and explicitly shown in Ref.~\cite{HRV00} may be interpreted as saying that the characteristic function of any $L \in \Mod[p]\Log$ may be computed by a \FUL[p] machine.
That is, a \Mod[p]\Log\ oracle can be directly simulated in a \Mod[p]\Log\ algorithm, by simulating the corresponding \FUL[p] machine as a subroutine.
Our interest in the function class \FUL[k] is for essentially the same reason, \ie~an oracle for computing any function $f \in \FUL[k]$ can be substituted with a simulation of the \FUL[k] machine itself in the same manner:
\begin{lemma}
	\label{lemma:oracleClosure}
	For all $k \ge 2$\,, the class \FUL[k] is low for \Mod[k]\Log, \co\Mod[k]\Log, \num[k]\Log, and itself.
\end{lemma}
\noindent
The proof is trivial: one simply simulates the \FUL[k] machine computing $f$ as part of the nondeterministic logspace machine/algorithm for the corresponding decision/function class.
From simple number-theoretic considerations, the classes \FUL[k] have properties which are similar in appearance to those of \co\Mod[k]\Log\ (while in fact giving rise to opposite class containments):
\begin{theorem}
	\label{thm:FULnormalForm}
  Let $k = p_1^{e_1} p_2^{e_2} \cdots p_\ell^{e_\ell}$ be the factorization of $k \ge 2$ into prime power factors $p_j^{e_j}$.
	Then $\FUL[k] = \FUL[p_1] \inter \FUL[p_2] \inter \cdots \inter \FUL[p_\ell]$, and in particular $\FUL[k] = \FUL[p_1 p_2 \cdots p_\ell]$.
\end{theorem}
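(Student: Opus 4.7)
The plan is to prove the two containments separately. The inclusion $\FUL[k] \subset \FUL[p_j]$ for each $j$ is immediate from the definition: since $p_j$ divides $k$, any branch-count that is $\equiv 1 \pmod{k}$ is also $\equiv 1 \pmod{p_j}$, and similarly for $\equiv 0$. Thus the same machine that witnesses $f \in \FUL[k]$ also witnesses $f \in \FUL[p_j]$, and intersecting over $j$ yields $\FUL[k] \subset \bigcap_{j=1}^\ell \FUL[p_j]$.

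For the converse I would proceed in two stages. In the first, I would amplify from $\FUL[p]$ to $\FUL[p^e]$ at each prime $p$. Given a $\FUL[p]$ machine $M$ for $f$ with counting function $\varphi(x,y)$, I would construct a machine $M'$ which interleaves $p^{e-1}$ independent simulations of $M$, emitting an output bit only once all $p^{e-1}$ simulations are about to write the same next bit, and accepting only once all simulations accept. Since $p^{e-1}$ is a constant, the combined work-tape usage remains logarithmic, and the number of accepting branches of $M'$ outputting $y$ is $\varphi(x,y)^{p^{e-1}}$. The key arithmetic facts are: (i)~$(1 + pa)^{p^{e-1}} \equiv 1 \pmod{p^e}$ for any integer $a$, which follows by induction on $e$ from the binomial expansion of $(1 + p^{e-1}b)^p$; and (ii)~$c^{p^{e-1}} \equiv 0 \pmod{p^e}$ whenever $c \equiv 0 \pmod p$, since $p^{e-1} \ge e$ for all primes $p \ge 2$ and $e \ge 1$. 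Together these give $f \in \FUL[p^e]$.

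In the second stage I would combine across distinct primes via the Chinese remainder theorem. Given $\FUL[p_j^{e_j}]$ machines $M_j$ for $f$ with counting functions $\varphi_j$, I would first select integers $c_j \in \{0,1,\ldots,k-1\}$ with $c_j \equiv 1 \pmod{p_j^{e_j}}$ and $c_j \equiv 0 \pmod{p_i^{e_i}}$ for $i \ne j$ (available by CRT), and then define a composite machine $M$ which makes an initial nondeterministic choice of an index $j$ together with a dummy tag in $\{1,\ldots,c_j\}$, and then runs $M_j$. Its counting function is $\sum_j c_j \varphi_j(x,y)$, which modulo each $p_i^{e_i}$ reduces to $\varphi_i(x,y)$; hence this sum is $\equiv 1 \pmod{k}$ precisely when $y = f(x)$ and $\equiv 0 \pmod k$ otherwise, so $f \in \FUL[k]$. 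The final equality $\FUL[k] = \FUL[p_1 p_2 \cdots p_\ell]$ is then immediate, since applying the same decomposition to both sides produces the same intersection $\bigcap_j \FUL[p_j]$.

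The main technical obstacle I anticipate is the logspace implementation of the iterated product in the amplification stage: the $p^{e-1}$ simulations of $M$ must be synchronised at each output-writing step so that their emitted bits can be compared, which requires carrying all their configurations in work memory simultaneously (permissible only because $p^{e-1}$ is a constant independent of the input). This construction parallels the oracle-substitution argument of Hertrampf, Reif, and Vollmer~\cite{HRV00} used to establish closure of $\Mod[p]\Log$. Once the amplification is in hand, both the CRT combination and the congruence identities are routine.
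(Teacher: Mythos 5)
Your argument is correct, and it rests on the same two machine-level building blocks as the paper's proof --- a weighted nondeterministic ``disjunction'' over the prime indices to combine machines for pairwise coprime moduli, and a constant number of synchronised repeated simulations to raise the modulus --- but you compose them in the opposite order, which changes the number theory required. The paper first merges the $\FUL[p_j]$ machines into a single $\FUL[\kappa]$ machine for the square-free part $\kappa = p_1\cdots p_\ell$, using weights $\beta\kappa/p_j$ that are precisely your CRT coefficients taken modulo $\kappa$, and then amplifies once from $\kappa$ to $k$ by running $k/\kappa$ copies; since the resulting branch count is $\varphi^{k/\kappa}$, it must argue that $\varphi\equiv 1\pmod{\kappa}$ implies $\varphi^{k/\kappa}\equiv 1\pmod{k}$, which it deduces from the fact that the residues congruent to $1$ mod $\kappa$ form a subgroup of order $k/\kappa$ of the units of $\Z/k\Z$, together with $k \mid \kappa^{k/\kappa}$ for the zero case. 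You instead amplify each prime separately, from $\FUL[p_j]$ to $\FUL[p_j^{e_j}]$ with exponent $p_j^{e_j-1}$, for which the elementary lifting identity $(1+pa)^{p^{e-1}}\equiv 1\pmod{p^e}$ and the bound $p^{e-1}\ge e$ suffice, and only then combine via CRT at modulus $k$. The two routes are equivalent in content and both stay in logspace for the same reason (the number of parallel simulations, $p_j^{e_j-1}$ or $k/\kappa$, is a constant); yours localises the arithmetic to one prime at a time, while the paper's defers all amplification to a single step at the end. One point you may wish to spell out: aborting a composite branch as soon as two simulations disagree on the next output symbol does not change the relevant count, since the accepting composite branches that output $y$ are in bijection with tuples of accepting branches of $M$ all outputting $y$, so the number of such branches is exactly $\varphi(x,y)^{p^{e-1}}$ as you claim, with the pruned branches all rejecting.
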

\begin{proof}
	Throughout the following, let $\kappa = p_1 p_2 \cdots p_\ell$ be the largest square-free factor of $k$.
	We first show $\FUL[\kappa] = \FUL[p_1]\! \inter \cdots \inter \FUL[p_\ell]$.
	Suppose $f \in \FUL[p_j]$ for each $1 \le j \le \ell$, and is computed by some \FUL[p_j] machine $\mathbf U_j$ in each case.
	Let
	\begin{equation}
			\gamma	\;=\;	\kappa/p_1 + \kappa/p_2 + \cdots + \kappa/p_\ell \;.
	\end{equation}
	For each prime $p_j$, all terms in the right-hand sum except for the $j\textsuperscript{th}$ term are divisible by $p_j$: then $\gamma$ has no prime divisors in common with $\kappa$, so that $\gcd(\gamma,\kappa) = 1$.
	Let $\beta \equiv \gamma^{-1} \pmod{\kappa}$, and consider the machine $\mathbf U'$ which performs the following:
	\begin{enumerate}
	\item
		Nondeterministically write some index $1 \le j \le \ell$ on the work tape.
	\item
		For each such $j$, nondeterministically select some integer $0 \le q < \kappa\beta/p_j$.
	\item
		In each branch, simulate $\mathbf U_j$ on the input $x$, accepting if and only if $\mathbf U_j$ accepts. 
	\end{enumerate}
	For any string $y \in \Sigma^\ast$ different from $f(x)$, the number of branches in which $\mathbf U_j$ accepts is $m_j p_j$ for some $m_j \in \N$; and so $\mathbf U'$ has $m_j \kappa \beta $ branches where $j$ is written on the work tape and $y$ is written on the output tape.
	Summing over all $j$, we find that any $y \ne f(x)$ is written on the output tape in a number of branches which is a multiple of $\kappa$.
	Similarly, for the case $y = f(x)$, the number of branches in which $\mathbf U_j$ accepts is $m_j p_j + 1$ for some $m_j \in \N$; and so $\mathbf U'$ has $m_j \kappa \beta + \kappa \beta/p_j$ branches where $j$ is written on the work tape and $f(x)$ is written on the output tape.
	Summing over all $j$ and neglecting multiples of $\kappa$, we have $\beta\bigl(\kappa/p_1 + \cdots + \kappa/p_\ell) = \beta\gamma \equiv 1 \pmod{\kappa}$ branches in which $f(x)$ is written on the output tape; thus $\mathbf U'$ is an \FUL[\kappa] machine computing $f$.
	The converse containment $\FUL[\kappa] \subset \FUL[p_j]$ for each $1 \le j \le \ell$ is trivial.

	It remains to show that $\FUL[\kappa] \subset \FUL[k]$, the reverse containment again being easy.
  Let $f \in \FUL[\kappa]$ be computed by an \FUL[\kappa] machine $\mathbf U'$ and have outputs of length bounded by $N := N(x) \in \poly(|x|)$.

	The idea of our approach is based on the following construction for $N \in O(\log |x|)$, which would for instance apply if we wished to evaluate logarithmically many symbols of $f(x)$ in the in the work-tape of another machine.
	We construct a \FUL[k] machine ${\mathbf U}''$ which computes $f$ by simply performing $k/\kappa$ consecutive independent simulations of $\mathbf U'$, recording the outcome of each simulation on the work tape.
	For each $1 \le j \le k/\kappa$, in any given computational branch, let $\sigma_j(x)$ be the string computed by the $j\textsuperscript{th}$ simulation of $\mathbf U'$.
	If any of the simulations reject the input, or produces a different output from the first simulation (\ie~if $\sigma_j(x) \ne \sigma_1(x)$ for any $1 \le j \le k/\kappa$), then $\mathbf U'$ rejects.
	Otherwise, $\mathbf U'$ writes the string $\sigma_1(x)$ agreed upon by the simulations to the output tape.

	The detailed analysis for $N \in \omega(\log |x|)$ proceeds by performing a similar simulation for blocks of output characters of some length $L := L(x) \in O(\log |x|)$.
	For each $1 \le m \le N/L$, define a machine $\mathbf U'_m$ which simulates $\mathbf U'$ except that it only writes the $m\textsuperscript{th}$ block of $L$ consecutive characters from $f(x)$, padding the end of $f(x)$ with a symbol $\bullet \notin \Sigma$ if necessary to obtain a string of length $N$.
	Let $M = N/L$ for the sake of brevity: rather than perform $k/\kappa$ simulations of $\mathbf U'$, the machine $\mathbf U''$ performs $k/\kappa$ simulations of each $\mathbf U'_m$ for $1 \le m \le M$, in sequence.
	Again, while simulating $\mathbf U'_m$ for any particular $m$, the machine $\mathbf U''$ stops and rejects if any of the simulations reject or produce a result inconsistent with the previous simulations; and in those branches in which $\mathbf U'_m$ has produced the same output $y\sur{m}$ each time, the string $y\sur{m}$ is written on the output tape (excluding any symbol $\bullet \notin \Sigma$).
	After finishing the simulations of $\mathbf{U}'_m$ for any $m < M$, it reuses the work-space to simulate the next machine $\mathbf{U}'_{m+1}$.
	Once the simulations of $\smash{\mathbf U'_M}$ are finished, $\mathbf U''$ accepts unconditionally in any branch where it has not yet rejected.

	Let $\varphi(x,y)$ be the number of computational branches in which $\mathbf U'$ accepts with the string $y \in \Sigma^\ast$ written on the tape: by hypothesis, $\varphi(x,y) \equiv 0 \pmod{\kappa}$ for each $y \ne f(x)$, and $\varphi(x,f(x)) \equiv 1 \pmod{\kappa}$.
	Similarly, let $\varphi_m(x,y\sur{m})$ be the number of branches in which $\mathbf U'_m$ accepts with $y\sur{m} \in (\Sigma \cup \{\bullet\})^L$ written on the tape for each $1 \le m \le M$, and $\Phi(x,y)$ be the number of branches in which $\mathbf U''$ accepts with $y \in \Sigma^\ast$ written on the tape.
	If $y \approx y\sur{1} y\sur{2} \cdots y\sur{M}$ (modulo any $\bullet$ symbols contained in any of the substrings $y\sur{m}$), then the number of branches in which $\mathbf U''$ accepts with a given string $y\sur{m}$ written on the $m\textsuperscript{th}$ block of $L$ tape cells of the output tape is \textbf{(a)}~independent of the other substrings $y\sur{j}$ for $j \ne m$, and \textbf{(b)}~is the result of $k/\kappa$ simulations of $\mathbf U'_m$ which each produce the substring $y\sur{m}$ as output; so that we have
	\begin{equation}
			\Phi(x,y)	\;=\;		\varphi_1\bigl(x,y\sur{1}\bigr)^{k\!\!\;/\!\!\:\kappa} \; \varphi_2\bigl(x,y\sur{2}\bigr)^{k\!\!\;/\!\!\:\kappa} \;\cdots \; \varphi_M\bigl(x,y\sur{M}\bigr)^{k\!\!\;/\!\!\:\kappa}.
	\end{equation}
	Note that $\varphi_m(x,y\sur{m})$ is equal to the number of computational branches in which $\mathbf U'$ writes any string $\sigma \in \Sigma^\ast$ on the output tape, for which the $m\textsuperscript{th}$ block is similar to $y\sur{m}$ (again ignoring any $\bullet$ symbols in $y\sur{m}$).
	This is the sum of $\varphi(x,\sigma)$ over all strings $\sigma$ consistent with the substring $y\sur{m}$.
	By hypothesis, $\varphi(x,\sigma)$ is a multiple of $\kappa$ except for the single case where $\sigma = f(x)$, in which case $\varphi(x,\sigma) \equiv 1 \pmod{\kappa}$.
	Thus $\varphi_m(x,y\sur{m}) \equiv 1 \pmod{\kappa}$ if $y\sur{m} \in \Sigma^L$ is consistent with the $m\textsuperscript{th}$ block of $f(x)$; otherwise, $\varphi_m(x,y\sur{m}) \equiv 0 \pmod{\kappa}$.
	We then observe the following:
	\begin{itemize}
	\item
		Let $E := \max \{\, e_j \mid k/p_j^{e_j} \in \Z \,\}$; then $E \le p_j^{E-1} \le k/\kappa$ for any $1 \le j \le \ell$.
		As $k$ divides $\kappa^E = p_1^E \cdots p_\ell^E \le \kappa^{k/\kappa}$, we then have $\varphi_m(x,y\sur{m})^{k/\kappa} \equiv 0 \pmod{k}$ if $\varphi_m(x,y\sur{m}) \equiv 0 \pmod{\kappa}$.
	\item
		The integers which are congruent to $1$ modulo $\kappa$ form a subgroup of order $k/\kappa$ within the integers modulo $k$; it then follows that $\varphi_m(x,y\sur{m})^{k/\kappa} \equiv 1 \pmod{k}$ if $\varphi(x,y\sur{m}) \equiv 1 \pmod{\kappa}$.
	\end{itemize}
	Taking the product over $1 \le m \le M$, we have $\Phi(x,y) \equiv 0 \pmod{k}$ unless each substring $y\sur{m}$ is consistent with the $m\textsuperscript{th}$ block of $f(x)$, in which case $y = f(x)$ and $\Phi(x,y) \equiv 1 \pmod{k}$.
	Thus $\mathbf U''$ is an \FUL[k] machine computing $f$.
\end{proof}

The requirement that an \FUL[k] machine have one accepting branch mod $k$ (or possibly zero if the machine computes a partial function) gives the following relation to the classes \Fun\Mod[k]\Log\ and \Fundot\co\Mod[k]\Log:
\begin{lemma}
	\label{lemma:FULinFun}
  For all $k \ge 2$, we have $\FUL[k] \subset \Fun\Mod[k]\Log \,\inter\, \Fundot\co\Mod[k]\Log$.
\end{lemma}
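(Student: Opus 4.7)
The plan is to produce, from a \FUL[k] machine $\mathbf U$ computing $f$, a single \num\Log\ function $\Phi(x,j,b)$ whose residue modulo $k$ is $1$ exactly when $(x,j,b) \in \bits[f]$ and $0$ otherwise; this simultaneously witnesses $\bits[f] \in \Mod[k]\Log$ and, via an elementary affine shift, $\bits[f] \in \co\Mod[k]\Log$. The polynomial bound $|f(x)| \in \poly(|x|)$ comes for free because $\mathbf U$ is a logspace (hence polynomial-time) machine with a write-only output tape.

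First I would construct a nondeterministic logspace machine $\mathbf N$ that on input $(x,j,b)$ simulates $\mathbf U$ on $x$ while maintaining a position counter $c$ of $O(\log|x|)$ bits --- which is legitimate because the output length is polynomial in $|x|$ --- and a single latched character $s$ that records whatever $\mathbf U$ writes at the step when $c$ first becomes equal to $j$. At the end of each branch, $\mathbf N$ accepts iff $\mathbf U$ accepts and either ($c \ge j$ with $s=b$) or ($c < j$ with $b = \bullet$). Writing $\varphi(x,y)$ for the number of branches of $\mathbf U$ ending with $y$ on the output tape, the accepting-branch count of $\mathbf N$ is
\begin{equation}
\Phi(x,j,b) \;=\; \sum_{y \text{ consistent with } (j,b)} \varphi(x,y).
\end{equation}
By the defining property of the \FUL[k] machine $\mathbf U$, every summand with $y \ne f(x)$ vanishes modulo $k$, and (when $x \in \dom(f)$) the single summand $\varphi(x,f(x)) \equiv 1 \pmod k$ is present exactly when $f(x)$ is consistent with $(j,b)$, which is the defining condition for $(x,j,b) \in \bits[f]$. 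Hence $\Phi(x,j,b) \equiv 1 \pmod k$ if $(x,j,b) \in \bits[f]$ and $\Phi(x,j,b) \equiv 0 \pmod k$ otherwise, the latter case comfortably covering $x \notin \dom(f)$ (where every $\varphi(x,y)$ is already $\equiv 0 \pmod k$). This places $\bits[f] \in \Mod[k]\Log$.

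Second, for $\co\Mod[k]\Log$ membership I would apply the obvious affine shift in the style of the counting constructions used in Theorem~\ref{thm:FULnormalForm}. Build a machine $\mathbf N'$ which first nondeterministically chooses $d \in \{0,1,\ldots,k-1\}$: on $d=0$ it accepts immediately, and for each $d \ge 1$ it runs a fresh copy of $\mathbf N$. Its accepting-branch count is $1 + (k-1)\Phi(x,j,b)$, which is $\equiv 0 \pmod k$ exactly when $\Phi \equiv 1 \pmod k$, i.e.\ exactly when $(x,j,b) \in \bits[f]$. This puts $\bits[f]$ in $\co\Mod[k]\Log$ as well, completing the inclusion.

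Both steps are routine closure constructions on \num\Log, so I do not anticipate a real obstacle; the only points that need a little care are keeping the position counter within logspace (handled by the polynomial output-length bound), correctly implementing the $\bullet$ convention when $j$ exceeds $|f(x)|$, and checking that the $x \notin \dom(f)$ case is consistent with the residue equations above --- which it is, since both $(x,j,b) \in \bits[f]$ and $\Phi(x,j,b) \equiv 1 \pmod k$ fail trivially in that regime.
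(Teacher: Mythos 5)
The proposal is correct and takes essentially the same approach as the paper: simulate the \FUL[k] machine while latching the $j\textsuperscript{th}$ output symbol, so that the accepting-branch count is $\equiv 1 \pmod{k}$ exactly on members of \bits[f], giving \Mod[k]\Log\ membership directly. The only (cosmetic) difference is in the \co\Mod[k]\Log\ step, where you use the affine shift $1+(k-1)\Phi$ instead of the paper's complementation of the acceptance test; if anything, your variant treats the $x \notin \dom(f)$ case of the partial-function convention more cleanly, since it yields a count $\equiv 1 \not\equiv 0 \pmod{k}$ there.
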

\begin{proof}
  Let $\mathbf U$ be a \FUL[k] machine computing $f: \Sigma^\ast \rightharpoonup \Sigma^\ast$.
	Consider a nondeterministic logspace machine $\mathbf T$ taking inputs $(x,j,b) \in {\Sigma^\ast \x \N \x \bigl(\Sigma \cup \{\bullet\}\bigr)}$, and which simulates $\mathbf U$, albeit ignoring all instructions to write to the output tape, except for the $j\textsuperscript{th}$ symbol which it writes to the work-tape.
	(If $j > |f(x)|$, $\mathbf T$ instead writes ``$\bullet$'' to the work-tape.)
	Then $\mathbf T$ compares the resulting symbol $f(x)_j$ against $b$, accepting if they are equal and rejecting otherwise.
	Then the number of accepting branches is equivalent to $1$ modulo $k$ if $f(x)_j = b$, and is a multiple of $p$ otherwise, so that $\bits[f] \in \Mod[k]\Log$.
	To show $\bits[f] \in \co\Mod[k]\Log$, we may consider a machine $\mathbf T'$ which differs from $\mathbf T$ only in that it rejects if $f(x)_j = b$, and accepts otherwise.
	Thus $\FUL[k] \subset \Fun\Mod[k]\Log \,\inter\, \Fundot\co\Mod[k]\Log$.
\end{proof}
\noindent 
This identifies $\FUL[k]$ as an important subclass of the existing logspace-modular function classes.
For prime-power moduli, we may sharpen Lemma~\ref{lemma:FULinFun} to obtain a useful identity:
\begin{lemma}
	\label{lemma:FULprimeEqualsFun}
	For any prime $p$ and $e \ge 1$, $\FUL[p^e] = \Fun\Mod[p]\Log = \Fundot\co\Mod[p]\Log$.
\end{lemma}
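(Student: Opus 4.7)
The plan is to first collapse the right-hand side into a single class using Proposition~\ref{prop:complements}, which gives $\Mod[p]\Log = \co\Mod[p]\Log$ and hence $\Fun\Mod[p]\Log = \Fundot\co\Mod[p]\Log$, and then to establish the two inclusions separately. The easy direction is an immediate consequence of Lemma~\ref{lemma:FULinFun} combined with Proposition~\ref{prop:complements}: $\FUL[p^e] \subset \Fun\Mod[p^e]\Log \inter \Fundot\co\Mod[p^e]\Log = \Fun\Mod[p]\Log \inter \Fundot\co\Mod[p]\Log$.

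For the reverse inclusion, I would take $f \in \Fundot\co\Mod[p]\Log$ and extract a counting function $\varphi \in \num\Log$ with $\varphi(x,j,b) \not\equiv 0 \pmod{p}$ exactly when $f(x)_j = b$ (using $\co\Mod[p]\Log = \Mod[p]\Log$ to flip the zero/nonzero condition in the defining relation of $\bits[f]$). The key trick is then to raise $\varphi$ to a carefully chosen constant power: consider $\tilde\varphi := \varphi^{p^{e-1}(p-1)}$, which lies in $\num\Log$ by closure under constant powers. Its value is $\equiv 1 \pmod{p^e}$ when $\gcd(\varphi,p)=1$ (Euler's theorem, noting $p^{e-1}(p-1) = \phi(p^e)$), and $\equiv 0 \pmod{p^e}$ when $p \mid \varphi$ (using $p^{e-1}(p-1) \ge e$).

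With $\tilde\varphi$ in hand, I would assemble a \FUL[p^e] machine that writes $f(x)$ one character at a time. Let $N(x) \in \poly(|x|)$ bound $|f(x)|$. At stage $j = 1, 2, \ldots, N(x)+1$, the machine nondeterministically guesses a symbol $b_j \in \Sigma \cup \{\bullet\}$; halts and accepts if $b_j = \bullet$; otherwise it writes $b_j$ to the output tape and runs a subroutine with exactly $\tilde\varphi(x,j,b_j)$ accepting paths, realised as the conjunction of $p^{e-1}(p-1)$ independent simulations of a $\num\Log$ machine for $\varphi$. The number of accepting branches producing a given output $y = y_1 \cdots y_\ell \in \Sigma^\ast$ then equals $\bigl(\prod_{i=1}^{\ell} \tilde\varphi(x,i,y_i)\bigr) \cdot \tilde\varphi(x,\ell+1,\bullet)$, which is $\equiv 1 \pmod{p^e}$ exactly when $y = f(x)$ and $\equiv 0 \pmod{p^e}$ otherwise.

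The main obstacle is the power-raising step, which requires both halves of the Euler-type sharpening: the coprime case gives $1 \pmod{p^e}$ by Euler, while the divisible case requires $p^e \mid \varphi^{p^{e-1}(p-1)}$, reducing to the numerical inequality $p^{e-1}(p-1) \ge e$ for every prime $p$ and every $e \ge 1$. Once $\tilde\varphi$ is pinned down, the remainder is bookkeeping with the output tape and the termination symbol $\bullet$, ensuring that both short and long candidate outputs also receive multiplicity $\equiv 0 \pmod{p^e}$.
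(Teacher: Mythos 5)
Your proof is correct, and it reaches the same destination by a more direct route than the paper's. The paper first collapses everything to the prime case: by Theorem~\ref{thm:FULnormalForm} one has $\FUL[p^e] = \FUL[p]$, and by Proposition~\ref{prop:complements} the right-hand classes equal $\Fun\Mod[p]\Log$, so it suffices to prove $\FUL[p] = \Fun\Mod[p]\Log$. That case uses the same guess-and-verify template as yours --- nondeterministically select a symbol $b$, write it, and run a verifier $\mathbf T_{j,b}$ normalized to accept on $1$ or $0$ branches mod $p$ via the Fermat-exponent trick of Hertrampf~\etal\ (raising the count to the power $p-1$) --- and the lift from modulus $p$ to $p^e$ is then inherited from the proof of Theorem~\ref{thm:FULnormalForm}, which runs $p^{e-1}$ independent copies and uses the fact that the residues $\equiv 1 \pmod{p}$ form a subgroup of order $p^{e-1}$ modulo $p^e$. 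You instead work at modulus $p^e$ in one shot, normalizing with the single exponent $\phi(p^e)=p^{e-1}(p-1)$ via Euler's theorem together with the inequality $p^{e-1}(p-1)\ge e$ for the divisible case. The total repetition is the same (the paper's two stages compose to the exponent $(p-1)\cdot p^{e-1}$), but your argument is self-contained and does not invoke Theorem~\ref{thm:FULnormalForm}, whereas the paper's factorization keeps the $e=1$ argument reusable for composite moduli.

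One detail to repair in the write-up: you say the machine ``halts and accepts if $b_j=\bullet$''. Taken literally, this would cause every proper prefix of $f(x)$ to be produced on $\equiv 1 \pmod{p^e}$ accepting branches, violating the \FUL[p^e] condition. Your own count formula $\bigl(\prod_{i=1}^{\ell}\tilde\varphi(x,i,y_i)\bigr)\cdot\tilde\varphi(x,\ell+1,\bullet)$ reveals the intended fix: the guess $b_j=\bullet$ must also be verified by running the subroutine for $(x,j,\bullet)$ before accepting, so that a premature terminator contributes a factor $\equiv 0\pmod{p^e}$. With that correction the construction is sound, including for $x\notin\dom(f)$, where every factor vanishes mod $p^e$ and every output string is produced on $\equiv 0$ accepting branches, as Definition~\ref{def:FUL} requires.
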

\begin{proof}
	By Proposition~\ref{prop:complements}, Theorem~\ref{thm:FULnormalForm}, and Lemma~\ref{lemma:FULinFun}, it suffices to prove $\Fun\Mod[p]\Log \subset \FUL[p]$ for $p$ prime.
	For $f \in \Fun\Mod[p]\Log$, let $\mathbf T$ be a \Mod[p]\Log\ machine which decides \bits[f].

	We construct a family of machines $\mathbf T_{j,b}$ (for each $j \in \N$ and $b \in \Sigma \cup \{\bullet\}$), where each machine $\mathbf T_{j,b}$ writes $b$ on its output tape and simulates $\mathbf T$ to decide whether $(x,j,b) \in \bits[f]$ on an input $x \in \Sigma^\ast$.
	Without loss of generality, as in Ref.~\cite[Corollary~3.2]{HRV00} each machine $\mathbf T_{j,b}$ accepts on a number of branches $\varphi(x,j,b) \equiv 1 \pmod{p}$ if case $f(x)_j = b$, and $\varphi(x,j,b) \equiv 0 \pmod{p}$ otherwise.

	We form a \FUL[p] machine $\mathbf U_j$ computing $f(x)_j$ by taking the ``disjunction'' of the machines $\mathbf T_{j,b}$ over all $b \in \Sigma \cup \{\bullet\}$: \ie~$\mathbf U_j$ branches nondeterministically by selecting $b \in \Sigma \cup \{\bullet\}$ to write on the work-tape and simulates $\mathbf T_{j,b}$, accepting with one branch mod $p$ if and only if $b = f(x)_j$ and accepting with zero branches mod $p$ otherwise.
	Given some upper bound $|f(x)| \le N(x) \in \poly(|x|)$, we then construct a \FUL[p] machine $\mathbf U$ to compute $f(x)$ by simply simulating $\mathbf U_j$ for each $1 \le j \le N(x)$ in sequence, writing the symbols $f(x)_j$ individually on the output tape; accepting once it either computes a symbol $f(x)_j = \bullet$ (without writing $\bullet$ to the output) or the final iteration has been carried out.
\end{proof}
\noindent
This result is the crux of the result of Ref.~\cite{HRV00}, albeit extended beyond the characteristic functions of $L \in \Mod[p]\Log$: when $k$ is a prime power, \emph{any} function whose bits are verifiable by \co\Mod[k]\Log\ algorithms, can also be evaluated naturally as a subroutine of a \co\Mod[k]\Log\ algorithm.
The importance of this result to us lies in the consequence for \num[k]\Log, as the prototypical class of functions verifiable in \co\Mod[k]\Log:
\begin{corollary}
	\label{cor:numLogLow}
  $\num[p^e]\Log \subset \FUL[p]$ for any prime $p$ and $e \ge 1$.
	It follows that \num[p^e]\Log\ is low for $\Mod[p]\Log = \co\Mod[p]\Log$ and for $\num[p]\Log$ in this case.
\end{corollary}
\noindent
This follows from Proposition~\ref{lemma:numkLogInFun} and Lemma~\ref{lemma:oracleClosure}, and is the key technical ingredient of our result: it allows us to simulate logspace counting oracles modulo $p^e$ as a part of a \co\Mod[p]\Log\ algorithm.

\section{Solving congruences and nullspaces mod $k$}
\label{sec:reduction}

We return to the motivating problems of this article.
Let $A$ be an $n \x n$ integer matrix and $\vec y \in \Z^n$ be provided as the input to \LCON[k] or \LCONX[k]; or $B$ be an $n \x n$ matrix provided as input to \LCONNULL[k]\,.
Without loss of generality, the coefficients of $A$ and $\vec y$, or of $B$, are non-negative and bounded strictly above by $k$ (as reducing the input modulo $k$ can be performed in $\NC^1$).
We follow the analysis of Ref.~\cite[Section~8]{MC87} which reduces solving linear congruences to computing generating sets for nullspaces modulo the primes $p_j$ dividing $k$.
The contribution of this section is to show that the latter problem can be solved for prime powers via a reduction to matrix multiplication together with modular counting oracles from \num[p^e]\Log\ for prime powers $p^e$.

We consider nondeterministic logspace machines operating on an alphabet $\bar \Sigma_k = \Sigma_k \cup \{\bullet\}$, where $\Sigma_k = \{0,1,\ldots,k-1\}$.
For the function problems \LCONNULL[k] and \LCONX[k], we wish respectively to compute
\begin{itemize}
\item 
	a function $\mathcal N_k: \Sigma_k^{n^2} \to \Sigma_k^{Nn}$ for $N \in O(n)$ such that $\mathcal N_k(B)$ is a sequence of vectors $(\vec Z_0, \vec Z_1, \ldots, \vec Z_{N-1})$ which generate $\Null(B)$ in $\Z/k\Z$; and
\item
	a partial function $\mathcal S_k: \Sigma_k^{n^2+n} \rightharpoonup \Sigma_k^n$ such that $(A,\vec y) \in \dom(\mathcal S_k)$ if and only if there exists a solution $\vec x$ to the system $A \vec x \equiv \vec y \pmod{k}$, in which case $\vec x = \mathcal S_k(A,\vec y)$ is such a solution.
\end{itemize}

For $p$ prime, we first consider a logspace reduction to matrix inversion and iterated matrix products modulo $p^e$, in a machine equipped with a \num[p^e]\Log\ oracle to compute certain matrix coefficients.
The reduction itself is an adaptation of the analysis of McKenzie and Cook~\cite[Lemma~8.1]{MC87}, together with observations regarding the simulation of the \num[p^e]\Log\ oracles.

\begin{lemma}
	\label{lemma:LCONNULL-primePowers}
	For any $p$ prime and $e \ge 1$, we have $\LCONNULL[p^e] \in \FUL[p]$.
\end{lemma}
\begin{proof}
	We reduce \LCONNULL[p^e] to \LCONNULL[p]\,, matrix products, and access to oracles for computing coefficients of certain matrices.
	We proceed by showing, for each $1 \le t \le e$, that computing any individual coefficient from a set of vectors $\vec V\sur{t}_j$ which span $\Null(B)$ modulo $q = p^t$ can be achieved by a \num[p^t]\Log\ function.
	We remark on the case $t = 1$ further below, and suppose as an induction hypothesis that there exists some $1 \le t < e$ for which computing the coefficients of such a spanning set can be performed by \num[p^t]\Log\ functions.

	We sketch the analysis of Ref.~\cite[Lemma~8.1]{MC87} for completeness.
	Suppose that we have a generating set $\smash{\vec V\sur{t}_{\!1}, \ldots, \vec V\sur{t}_{\!N_t}}$ over $\Z/p^e\Z$ for the nullspace of $B$ modulo $p^t$, forming the columns of an $N_t \x n$ matrix $V\sur{t}$.
	Certainly any solution to $B \vec w \equiv 0 \pmod{p^{t+1}}$ must also be a solution to $B \vec w \equiv 0 \pmod{p^t}$;
	\begin{subequations}
	then we may decompose such $\vec w$ modulo $p^e$ as a linear combination of the vectors $\smash{\vec V\sur{t}_{\!j}}$,
	\begin{equation}
		\vec w = u_1 \vec V\sur{t}_{\!1} + \cdots + u_{N_t} \vec V\sur{t}_{\!N_t} + p^t \vec{\hat w}
	\end{equation}
	for some $\vec{\hat w} \in \Z^n$; or more concisely,
	\begin{equation}
			\vec w = \tilde V\sur{t} \vec z	\;,
	\end{equation}
	for 
	block matrices $\tilde V\sur{t} = \smash{\bigl[\, \vec V\sur{t}_{\!1} \,\, \vec V\sur{t}_{\!2} \; \cdots \; \vec V\sur{t}_{\!N_t} \;\big|\; p^t I \;\bigr]}$ and $\vec z = \smash{\bigl[\, u_1 \,\, u_2 \;\cdots\; u_{N_t}\bigr]}\trans \oplus \vec{\hat w} \in \Z^{N_t + n}$.
	\end{subequations}
	To consider the additional constraints imposed by $B\vec w \equiv 0 \pmod{p^{t+1}}$, consider a decomposition $B = B_t + p^t \hat B_t$, where the coefficients of $B_t$ are bounded between $0$ and $p^t$.
	\begin{subequations}
	We then have 
	\begin{align}
		\Biggl( \sum_{j=1}^{N_t} u_j \Bigl[ B_t \vec V\sur{t}_{\!j} \,&+\;p^t \!\hat{B}_t \vec V\sur{t}_{\!j} \Bigr] \Biggr) + p^t B_t \vec{\hat w} 
		\notag\\[-1ex]&\equiv\,
		B \Biggl( \sum_{j=1}^{N_t} u_j \vec V\sur{t}_{\!j} \!\Biggr) + p^t \vec{\hat w} \equiv 0 \pmod{p^{t+1}}\,.
	\end{align}
	As the coefficients of each $B_t \vec V\sur{t}_{\!j}$ is divisible by $p^t$ by construction, we may simplify to
	\begin{equation}
		\Biggl( \sum_{j=1}^{N_t} u_j \Bigl[ B_t \vec V\sur{t}_{\!j} \!\big/ p^t \,+\; \!\hat{B}_t \vec V\sur{t}_{\!j} \Bigr] \Biggr) + B_t \vec{\hat w} 
		\equiv 0 \pmod{p}\,,
	\end{equation}
	or somewhat more concisely,
	\begin{equation}
	\label{eqn:liftingConstraint}
		\bar B\sur{t} \vec z \equiv 0 \pmod{p},
	\end{equation}
	\end{subequations}
	where we define
	\begin{align}
	\label{eqn:defineBbar}
		\bar B\sur{t} &=\, \bigl[\, \vec b\sur{t}_{1} \,\, \vec b\sur{t}_{2} \; \cdots \; \vec b\sur{t}_{N_t} \,\bigr|\; B_t \,\bigr],
	&
		\quad\text{for}\;\;
		\vec b\sur{t}_{j} &=\, B_t \vec V\sur{t}_{\!j} \!\big/ p^t  \;+\; \hat{B}_t \vec V\sur{t}_{\!j}	,
	\end{align}
	and where $\vec z$ is as we defined it above.
	To find not just one vector $\vec w$ but a set of generators $\smash{\vec V\sur{t+1}_{\!1}\!},\, \ldots, \smash{\vec V\sur{t+1}_{\!N_{t+1}}}$ over $\Z/p^e\Z$ for $\Null(B)$ mod $p^{t+1}$, it suffices to find a generating set $\vec z_1, \ldots, \vec z_{N_{t+1}}\!$ for the nullspace of $\bar B\sur{t}$ mod $p$, and then set $\smash{\vec V\sur{t+1}_{\!h} = \tilde V\sur{t} \vec z_{h}}$.

	Note that the dimension of the nullspace of $\bar B\sur{t}$ modulo $p$, taken as a subspace of the vector space $\F_p$, is bounded by $N_t + n$; we may then span $\Null(B)$ mod~$p$ by vectors $\vec z_1 = p \unit_1$, $\vec z_2 = p \unit_2$, \ldots, $\vec z_{N_t} = p \unit_{N_t}$, and a collection of at most $N_t + n$ vectors $\vec z_h$ representing non-trivial vectors in $\Null(\bar B\sur{t})$ mod $p$ which have coefficients bounded between $0$ and $p$.
	We may take these vectors as the columns of a matrix $Z\sur{t+1}$: then we may compute a matrix whose columns span the nullspace of $B$ modulo $p^{t+1}$ as $V\sur{t+1} = \tilde V\sur{t} Z\sur{t+1}$.
	The number of columns of $V\sur{t+1}$ is $N_{t+1} \le 2N_t + n$, by construction.
	Thus we have a reduction of \LCONNULL[p^{t+1}] to evaluating $\bar B\sur{t}$, solving \LCONNULL[p] on input $\bar B\sur{t}$ to obtain the coefficients of the matrix $Z\sur{t+1}$, and matrix multiplication modulo $p^{t+1}$ of the matrices $\tilde V\sur{t}$ and $Z\sur{t+1}$.
	\begin{itemize}
	\item
		Coefficients of an iterated matrix product $M_1 M_2 \cdots M_{\poly(n)}$ modulo $k$ may be evaluated as \num[k]\Log\ functions, using the approach outlined in Ref.~\cite[Proposition~9]{BDHM92}.
		One simulates a branching program with nondeterministic choices, using the matrices in sequence as transition functions for each branching.
		If the coefficients of the matrices can be evaluated using an oracle for a class $C$ which is low for \num[k]\Log\ (such as $C = \FUL[k]$), the coefficients of the matrix product can then be straightforwardly computed as \num[k]\Log\ functions.

	\item
		Buntrock~\etal~\cite[Theorem~10]{BDHM92} implicitly show that individual coefficients of a spanning set for $\Null(B) \bmod{p}$ for integer matrices $B$ are \num[p]\Log\ functions (as in the remarks following Lemma~\ref{lemma:numkLogInFun}),
		using a sequence of $\NC^1$-reductions --- specifically those of Ref.\cite[Theorem~5]{BGH82} and Refs.~\cite{Berk84,Mulm87}, as well as conjunctive and disjunctive truth-table reductions which rely on Propositions~\ref{prop:closedIntersection} and~\ref{prop:complements}.
	\end{itemize}
	For the base case of $t = 1$, the latter observation immediately shows that coefficients of $\mathcal N_p$ may be computed by $\FUL[p]$ machines.
	To induct, let $q = p^{t+1}$, and suppose that for some $t \ge 1$ we have $\LCONNULL[p^t] \in \FUL[p] = \FUL[q]$.
	We may then evaluate the coefficients of a set of vectors $\mathbf V\sur{t}$ which span $\Null(B)$ modulo $p^t$ by simulating a \FUL[q] machine; coefficients of the matrix product with $B$ may then be computed in \num[q]\Log\ using the first observation above, which can be performed by a \FUL[q] machine using Corollary~\ref{cor:numLogLow}.
	The columns of $\bar B\sur{t}$ are either integer vectors of the form $B \vec V\sur{t}/p^t$, or are columns of $\hat B_t$: both can then be computed by \FUL[q] subroutines, as division by $p^t$ (which is bounded by the constant $p^e$) can be performed in $\NC^1$, and $B V\sur{t}$ can be computed by simulating \FUL[q] machines, and $\hat B_t$ is also obtained from the input matrix $B$ by integer division by $p^t$.
	Again using $\LCONNULL[p] \in \FUL[p] = \FUL[q]$, the coefficients of $Z\sur{t+1}$ are all either constant or in effect computable by \FUL[q] machines; the coefficients of $\tilde V\sur{t}$ are similarly constant or computable by \FUL[q] machines.
	We may then compute the coefficients of the matrix product $V\sur{t+1} = \tilde V\sur{t} Z\sur{t+1}$ as a $\num[q]\Log \subset \FUL[p]$ function.
	By induction up to $e$, we may then compute coefficients of $\mathcal N_{p^e}$ in $\FUL[p]$.
	
	To show that the entire function $\mathcal N_{p^e}$ may be computed in $\FUL[p]$, it suffices to bound the number of spanning vectors for the nullspace, to ensure that the matrices involved in the reductions are of polynomial size.
	By induction, the number of vectors $\vec V\sur{e}_j$ in the generating set will be $N_e \le n + 2n + \cdots + 2^{e-1}n \le p^e n \in O(n)$; it then follows that $\LCONNULL[p^e] \in \FUL[p]$.
\end{proof}

\noindent\textbf{A remark on oracle towers.}
The above reduction is recursive, but has constant depth, as $e \in O(1)$.
In particular, the exponent $e$ corresponds to the height of a tower of \FUL[p] oracles computing \num[p^t]\Log\ functions.
To simulate these oracles as part of \eg~a \co\Mod[p]\Log\ algorithm, the space resources can be described straightforwardly using a stack model of the work tape:
each nested \num[p^t]\Log\ oracle is simulated as a \FUL[p^e] subroutine which is allocated $O(\log |B|) = O(\log(n))$ space on the tape (where $|B| \in O(n^2)$ is the size of the input matrix after reduction modulo $p^e$), and which makes further recursive calls to \FUL[p^e] subroutines which do likewise, down depth at most $e$.
The space resources then scale as $O(e \log (n)) = O(\log(n))$.

Following Ref.~\cite[Lemma~5.3]{MC87}, we may reduce \LCON[k] and \LCONX[k] for $k \ge 2$ to \LCONNULL[k], as follows.
Suppose $A \vec x \equiv \vec y \pmod{k}$ has solutions.
Consider $B = [\,A\,|\, \vec y\,]$: then there are solutions $\bar{\vec x} = \vec x \oplus x_{n+1}$ to the equation $B \bar{\vec x} \equiv 0 \pmod{k}$ in which $x_{n+1} = -1$, and more generally in which $x_{n+1}$ is coprime to $k$.
Conversely, if there is such a solution $\bar{\vec x}$ to $B \bar{\vec x} \equiv 0 \pmod{k}$, we may take $\alpha \equiv -x_{n+1}^{-1} \pmod{k}$ and obtain $A (\alpha \vec x) \equiv  -\alpha x_{n+1} \vec y \equiv \vec y \pmod{k}$.
To determine whether $A \vec x \equiv \vec y \pmod{k}$ has solutions, or to construct a solution, it thus suffices to compute a basis for the nullspace of $B$, and determine from this basis whether any of the vectors $\bar{\vec x} \in \Null(B)$ have a final coefficient coprime to $k$; if so, the remainder of the coefficients of $\bar{\vec x}$ may be used to compute a solution to the original system.

\begin{lemma}
	\label{lemma:LCON-and-LCONX-primePowers}
	For any prime $p \ge 2$ and $e \ge 1$, we have $\LCON[p^e] \in \Mod[p]\Log$ and $\LCONX[p^e] \in \Fun\Mod[p]\Log$.
\end{lemma}
\begin{proof}
	We demonstrate an algorithm for both problems on a deterministic logspace machine with a $\FUL[p]$ oracle.
	For prime power moduli, $x_{n+1}$ is coprime to $p^e$ if and only if $p$ does not divide $x_{n+1}$.
	To solve \LCON[p^e] and \LCONX[p^e], we compute individually the final coefficients of the vectors $(\vec Z_0,\vec Z_1,\vec Z_2,\ldots) = \mathcal N_{p^e}(B)$ for $B = [\,A\,|\, \vec y\,]$, searching for an index $1 \le h \le N_e$ for which the dot product $\unit_{n+1} \cdot \vec Z_h$ is not divisible by $p$.
	Without loss of generality, we select the minimum such $h$.
	Finding such an index, or determining that there are none, is feasible for $\Fun\Log^{\FUL[p]}$ by using the oracle to evaluate coefficients, and then deterministically testing divisibility.
	If there is no such index $h$, we indicate that no solution exists by rejecting unconditionally. 
	Otherwise, there exists a solution to the linear congruence.
	\begin{itemize}
	\item 
		To indicate that $(A,\vec y)$ is a \emph{yes} instance of \LCON[p^e], we simply accept.
	\item
		To solve \bits[\mathcal S_k], we store the relevant coefficient $x^{(h)}_{n+1}$ on the work tape in binary and compute $\alpha \equiv -x_{n+1}^{-1} \pmod{p^e}$ deterministically.
		Using the \FUL[p] oracle, we then query the coefficients $z_{h,j}$ of $\smash{\vec Z_h}$, and compare $\alpha z_{h,j}$ to input coefficients, accepting (to indicate a \emph{yes} instance) if and only if the coefficients match. \qedhere
	\end{itemize}
\end{proof}


We may use the above results, together with the Chinese Remainder Theorem and Proposition~\ref{prop:normalForm}, to show that \LCON[k], \LCONX[k], and \LCONNULL[k] are complete problems for \co\Mod[k]\Log\ and \Fundot\co\Mod[k]\Log\ respectively for arbitrary $k \ge 2$.
This is easiest for \LCON[k], but the same basic approach may be used in each case.

\smallskip
\noindent\textbf{A remark on $\NC^1$ reductions.} Our usage of the terminology of $(\co)\Mod[k]\Log$-completeness below follows that of Ref.~\cite{BDHM92} (Definition~5 and the introduction to Section~3). Note that in the case of $k$ composite, the classes $\Mod[k]\Log$ are not known to be closed under $\NC^1$ reductions (this would imply, for instance, that $\Mod[k]\Log$ is closed under complements, and in particular low for itself).
By Propositions~\ref{prop:closedIntersection} and~\ref{prop:normalForm} together with Ref.~\cite[Lemma~6(v)]{BDHM92}, we may show that $\co\Mod[k]\Log$ is closed under a particular kind of $\NC^1$ reduction, in which the final output gate is an \textsc{and} gate, and each subtree which produces an input to that gate is produced by an $\NC^1$ circuit with oracles for $\Mod[p_j]\Log$ for some \emph{single} prime $p_j$ which divides $k$ (the primes may vary for different subtrees).
In the results below one may substitute this sort of reduction for $\NC^1$ reductions, with no confusion.

\begin{theorem}
	\label{thm:lcon-complete}
  For all $k \ge 2$, $\LCON[k]$ is $\co\Mod[k]\Log$-complete under $\NC^1$ reductions.
\end{theorem}
\begin{proof}
	Let $k = q_1 q_2 \cdots q_\ell$ for powers of distinct primes $q_j = p_j^{e_j}$.
	As we implied in the introduction, one may reduce \LCON[p_j]\! to \LCON[k], for any prime $p_j$ dividing $k$, by considering the feasibility of the congruence
	\begin{equation}
		(kA/p_j)\, \vec x \;\equiv\; k\vec y / p_j \pmod{k},
	\end{equation}
	which is equivalent to $A \vec x \equiv \vec y \pmod{p_j}$, by dividing both sides and the modulus by $k/p_j \in \N$.
	By Propositions~\ref{prop:LCONprime} through~\ref{prop:closedIntersection}, all problems in \LCON[k] may be reduced to solving some instances of \LCON[p_j] for each ${1 \le j \le \ell}$: then \LCON[k] is \co\Mod[k]\Log-hard.
	Using the Chinese Remainder Theorem, we also have $\LCON[k] = \LCON[q_{\!\!\;1}] \inter \cdots \inter \LCON[q_{\!\!\;\ell}]$.
	As $\LCON[q_{\!\!\;j}] \in \Mod[p_j]\Log = \co\Mod[p_j]\Log$ for each $1 \le j \le \ell$, it follows by Proposition~\ref{prop:normalForm} that $\LCON[k] \in \co\Mod[k]\Log$ as well.
\end{proof}

\begin{theorem}
	\label{thm:LCONX-LCONNULL-complete}
  For all $k \ge 2$, \LCONX[k] and \LCONNULL[k] are $\Fundot\co\Mod[k]\Log$-complete under $\NC^1$ reductions.
\end{theorem}
\begin{proof}
	Let $k = q_1 q_2 \cdots q_\ell$ for powers of distinct primes $q_j = p_j^{e_j}$.
	We define \prob{congbits$(f,q_j)$} to be the decision problem of determining for inputs $(x,h,b) \in \Sigma_k^\ast \x \N \x \bar\Sigma_k$ whether $x \in \dom(f)$, and if so, whether either $f(x)_h \equiv b \pmod{q_j}$ for $b \ne \bullet$ or $f(x)_h = \bullet = b$.

\begin{itemize}
\item 
	Clearly \bits[\mathcal S_k] is the intersection of the problems \prob{congbits$(\mathcal S_k,q_j)$} for ${1 \le j \le \ell}$ by the Chinese Remainder Theorem.
	We show $\prob{congbits$(\mathcal S_k,q_j)$} \in \co\Mod[q_j]\Log$ for each ${1 \le j \le \ell}$, as follows.
	For $b \in \Sigma_k$, we may expand $b$ in binary on the work tape and evaluate its reduction $0 \le b' < q_j$ modulo a given prime power $q_j$; for $b = \bullet$ we simply let $b' = \bullet$ as well, so that $b' \in \bar\Sigma_{q_j}$.
	We perform a similar reduction for each coefficient in $(A,\vec y)$ to obtain an input $(A',\vec y')$ with coefficients in $\Sigma_{q_j}$.
	With a \co\Mod[p_j]\Log\ algorithm, we may then decide whether $((A',\vec y'),h,b') \in \bits[\mathcal S_{q_j}]$.
	Thus $\bits[\mathcal S_k] \in \co\Mod[k]\Log$.
\item
	We follow the reduction of Ref.~\cite[Theorem~8.3]{MC87}, to show $\bits[\mathcal N_k] \in \co\Mod[k]\Log$.
	Given vectors $\smash{\vec X\sur{q_j}_1, \ldots, \vec X\sur{q_j}_{N_j}}$ spanning the nullspace of $B$ modulo $q_j$ for each $1 \le j \le \ell$, the nullspace of $B$ modulo $k$ is spanned over the integers modulo $k$ by the vectors
	\begin{equation}
	  \tfrac{k}{q_1} \vec X\sur{q_1}_1,\;\ldots\,,\;\tfrac{k}{q_1} \vec X\sur{q_1}_{N_1},\; \tfrac{k}{q_2}\vec X\sur{q_2}_1,\;\ldots\,,\tfrac{k}{q_j}\vec X\sur{q_j}_h\,,\;\ldots\,,\;\tfrac{k}{q_\ell} \vec X\sur{q_\ell}_{N_\ell}\;.
	\end{equation}
	(We omit the vectors $k \unit_h$ included by Ref.~\cite{MC87}, as these are congruent to $\vec 0$ in $\Z/k\Z$.)
	Let $\vec Z_h$ be the list of such vectors, for $0 \le h < N_1 + \cdots + N_\ell$: we suppose without loss of generality that $\mathcal N_k$ is defined, for $k$ divisible by more than one prime, to produce this sequence of vectors as output.
	If we define
	\begin{align}
		M_j	\;=\; \sum_{t=1}^j N_t\;,
	\end{align}
	then each vector $\vec Z_h$ is congruent to $\vec 0$ modulo $q_j$, for every $j \ge 1$ such that $h < M_{j-1}$ or $h \ge M_j$.
	We may then reduce \bits[\mathcal N_k] to testing the congruence of coefficients of $\vec Z_h$ with $0$ modulo $q_j$ for all prime powers for which $h < M_{j-1}$ or $h \ge M_j$, and testing congruence with the coefficients of $\smash{\frac{k}{q_j} \vec X\sur{q_j}_{h - M_j+1}}$ otherwise.
	These congruences modulo each prime power $q_j$ can again be evaluated in \co\Mod[q_j]\Log\ algorithm for 
	\prob{congbits$_j(\mathcal N_k)$}, using the logspace reduction to \bits[\mathcal N_{q_j}] as above.
\end{itemize}
These suffice to show that
$\LCONX[k], \LCONNULL[k] \in \Fundot\co\Mod[k]\Log$ for all $k$.
To show that $\mathcal S_k$ is $\Fundot\co\Mod[k]\Log$-complete under logspace many-to-one reductions, we may note (as in the proof in Theorem~\ref{thm:lcon-complete} for \LCON[k]) that solving $\bits[\mathcal S_k]$ suffices to solve $\bits[\mathcal S_{p_j}]$ for each prime $p_j$ dividing $k$; as $\bits[\mathcal S_{p_j}]$ is \co\Mod[p_j]\Log-complete for each $1 \le j \le \ell$ by Ref.~\cite[Theorem~10]{BDHM92}, we may reduce any collection of languages $L_1, L_2, \ldots, L_\ell$ such that $L_j \in \co\Mod[p_j]\Log$ to $\bits[\mathcal S_k]$.
Then the intersection $L = L_1 \inter L_2 \inter \cdots \inter L_\ell \in \co\Mod[k]\Log$ is also logspace reducible to $\bits[\mathcal S_k]$.
As all languages $L \in \co\Mod[k]\Log$ have such a form by Proposition~\ref{prop:normalForm}, including all problems of the form $\bits[f]$ for $f \in \Fundot\co\Mod[k]\Log$.
It then follows that $\bits[\mathcal S_k]$ is \co\Mod[k]\Log\ complete; a similar result obtains for $\bits[\mathcal N_k]$.
\end{proof}

\section{Further Remarks}
\label{sec:remarks}

The above analysis was motivated by observing that the reduction of McKenzie and Cook~\cite{MC87} for \LCONX\ and \LCONNULL\ (which take the modulus $k$ as input, as a product of prime powers $p_j^{e_j} \in O(n)$) was very nearly a projective reduction to matrix multiplication, and that it remained only to find a way to realize the division by prime powers $p^t$ involved in the reduction to \LCONNULL[p].
By showing that logspace counting oracles modulo $p^e$ could be simulated by a \co\Mod[p]\Log\ machine, using the function class \FUL[k] as a notion of naturally simulatable oracles for the classes \Mod[k]\Log\ and \co\Mod[k]\Log, the containments of Theorems~\ref{thm:lcon-complete} and~\ref{thm:LCONX-LCONNULL-complete} became feasible.

In the recursive reduction for \LCONNULL[p^e], the fact that $e \in O(1)$ is essential not only for the logarithmic bound on the work tape, but also for the running time for the \co\Mod[k]\Log\ algorithm to be polynomial.
The \FUL[p] machines used to implement the \num[p^e]\Log\ oracles, from the constructions of Theorem~\ref{thm:FULnormalForm} and Lemma~\ref{lemma:FULprimeEqualsFun}, implicitly involve many repeated simulations of \co\Mod[p]\Log\ algorithms ($p^e / p = p^{e-1}$ times each) to decide equality of counting functions with residues $0 \le r < p^e$: this contributes to a factor of overhead growing quickly with $e$.
Therefore our results are mainly of theoretical interest, characterizing the complexity of these problems with respect to logspace reductions.
It is reasonable to ask if there is an algorithm on a \co\Mod[p]\Log\ machine for \LCONNULL[p^e], whose running time grows slowly with $e$.

We may use the classes $\num[k]\Log$ and $\FUL[k]$ to explore the consequences of other closure results.
For instance, if \Mod[k]\Log\ is closed under oracle calls, we may use the following alternative formulation of $\Mod[k]\Log$ to show that $\chi_k \in \FUL[k]$: 
\begin{lemma}
	\label{lemma:coMod-k-alt}
	For every $k \ge 2$, $L \in \co\Mod[k]\Log$ if and only if there exists $\varphi \in \num\Log$ such that $x \in L$ if and only if $\varphi(x)$ is \emph{coprime} to $k$.
	Furthermore, without loss of generality, $x \in L \implies \varphi(x) \equiv 1 \pmod{k}$.
\end{lemma}
\begin{proof}
  For $k = p_1^{e_1} p_2^{e_2} \cdots p_\ell^{e_\ell}$ as usual, we have $L \in \co\Mod[k]\Log$ if and only if $L = L_1 \inter L_2 \inter \cdots \inter L_\ell$ for languages $L_j \in \co\Mod[p_j]\Log = \Mod[p_j]\Log$ by Propositions~\ref{prop:normalForm} and~\ref{prop:complements}.
	Let $\mathbf T_1, \ldots, \mathbf T_\ell$ be nondeterministic logspace machines such that $\mathbf T_j$ accepts on input $x$ with a number of branches not divisible by $p_j$ if $x \in L_j$, and with zero branches modulo $p_j^{e_j}$ otherwise.
	Following Ref.~\cite{BDHM92}, we may without loss of generality suppose that $\mathbf T_j$ accepts $x \in L_j$ with a single branch modulo $p_j$.
	Using a similar construction to that of Theorem~\ref{thm:FULnormalForm} for the square-free case, we may obtain a \emph{single} nondeterministic logspace machine $\mathbf T$ which accepts on a single branch modulo $p_j$ if $x \in L_j$, and on a number of branches equivalent to $0$ mod $p_j$ otherwise.
	If $x \in L$, then the number of branches on which $\mathbf T$ accepts is equivalent to one modulo every prime $p_j$, which means that it is equivalent to one mod $k$; otherwise, there is some prime $p_j$ which divides the number of accepting branches, so that the number of branches is not coprime to $k$.
\end{proof}
\noindent
From this characterization of $\co\Mod[k]\Log$, we may argue that any function which is computable as a subroutine of a \co\Mod[k]\Log\ algorithm belongs to \FUL[k].
For instance, let $\mathbf T$ be an oracle machine implementing a $\Log^{\Mod[k]\Log}$ algorithm for \LCON[k] which simply queries an \LCON[k] oracle and writes the result to the work tape.
Suppose there is a nondeterministic machine $\mathbf T'$ implementing a \co\Mod[k]\Log\ algorithm (with the acceptance conditions of Lemma~\ref{lemma:coMod-k-alt}) which simulates $\mathbf T$.
Whatever decision procedure is performed on the oracle's output, it must be possible for $\mathbf T'$ to accept with one branch (modulo $k$); thus the oracle itself may be simulated in such a way that there are a number of computational branches which coprime to $k$, and without loss of generality equal to $1 \pmod{k}$.
If $\mathbf T'$ copies the oracle result to the output tape after simulating $\mathbf T$ with the appropriate conditions to obtain a single accepting branch modulo $k$, it follows that $\mathbf T'$ is a \FUL[k] machine computing $\chi_k$.
Thus closure of \Mod[k]\Log\ under oracles would have a catastrophic effect on the classes \Mod[p]\Log\ for $p$ dividing $k$.
Let $p_1, p_2, \ldots, p_\ell$ be the distinct prime factors of $k$: from Theorem~\ref{thm:FULnormalForm} and Lemma~\ref{lemma:FULinFun}, we have
\begin{equation}
  \FUL[k] = \Fun\Mod[p_1\!]\Log \;\inter\; \Fun\Mod[p_2\!]\Log \;\inter\; \cdots \;\inter\; \Fun\Mod[p_\ell]\Log\;.
\end{equation}
From $\chi_k \in \FUL[k]$, it follows that $\LCON[k] \in \Mod[p_j]\Log$ for each $p_j$\,, and that therefore $\Mod[k]\Log = \Mod[p_j]\Log$ for every prime factor $p_j$ of $k$.
This does not seem likely for $k$ divisible by multiple primes, unless these classes are also identical to some other class, such as \UL.
We might therefore expect the classes \Mod[k]\Log\ \emph{not} to be closed under oracles, on this basis.
Of course, a proof that $\Mod[p]\Log \ne \Mod[q]\Log$ for primes $p \ne q$ would imply that $\Log \ne \P$, so a proof that \Mod[k]\Log\ is not closed under oracles (for $k$ having multiple prime factors) should perhaps be expected to be difficult.

Note that the result $\LCON[k] \in \co\Mod[k]\Log$ is equivalent to the partial function $\varsigma_k = \LCON[k] \x \{1\}$ being in $\Fundot\co\Mod[k]\Log$.
It is not difficult to show that $\chi_k \in \Fundot\co\Mod[k]\Log$ is equivalent to $\Mod[k]\Log$ being closed under complementation, as follows.
If $\chi_k \in \Fundot\co\Mod[k]\Log$, we could by that fact verify instances of $\overline{\LCON[k]}$ by $\co\Mod[k]\Log$ algorithms; conversely, having \co\Mod[k]\Log\ algorithms for both \LCON[k] and its complement allows us also to verify values of $\chi_k$.
Szelepcs\'enyi~\cite{Sz99} attempts to show that \Mod[k]\Log\ is closed under complementation if and only if it is closed under oracles; however, his approach seems to rely on there being a complete language $L \in \co\Mod[k]\Log$ whose characteristic function $\chi_k$ is in \num[k]\Log\ (which is in fact equivalent to \Mod[k]\Log\ being closed under oracles).
We discuss this and Ref.~\cite{Sz99} in the Appendix.

We conclude with two questions:
\textbf{(a)}~Does $\FUL[k] = \Fun\Mod[k]\Log \;\inter\; \Fundot\co\Mod[k]\Log$ for each $k \ge 2$, and if not, how can we characterize \FUL[k] as a subset?
\textbf{(b)}~Does \FUL\ (the class of functions evaluatable by a \UL\ machine) equal the intersection of $\FUL[k]$ over all $k \ge 2$?

\subsection*{Acknowledgements}

I would like to thank Bjarki Holm for feedback in the early stages of work on this problem, and for indicating helpful references in the literature on the variable modulus problem \LCON; and for the helpful remarks of various anonymous reviewers.

\appendix
\bgroup
\renewcommand\thesection{Appendix.}
\section{Remarks on a preprint of Szelepcs\'enyi}
\egroup
\label{apx:Szelepscenyi}

In an apparently unpublished draft~\cite{Sz99}, Szelepcs\'enyi demonstrated that for arbitrary $k \ge 2$, the hierarchy of classes
\begin{equation}
		\Mod[k]\mathsf{LH}	\;:=\;	\Mod[k]\Log \;\union\; \Mod[k]\Log^{\Mod[k]\Log} \;\union\; \Mod[k]\Log^{\Mod[k]\Log^{\Mod[k]\Log}} \;\union\; \cdots
\end{equation}
is equal to $\AC^0(\Mod[k]\Log)$, the closure of \Mod[k]\Log\ under $\AC^0$ reductions.
He attempted to further demonstrate that $\Mod[k]\mathsf{LH} = \Log^{\Mod[k]\Log}$, and that this implied that closure of \Mod[k]\Log\ under complementation, oracle calls, or completely general $\NC^1$ reductions were equivalent conditions.
In view of the discussion following Lemma~\ref{lemma:coMod-k-alt}, this is a significant claim, which we now discuss.

Szelepcs\'enyi's main focus seems to be on the analogy between \co\Mod[k]\Log\ (which he describes as $\Mod[=k]\Log$) and \Ceq\Log; the former being the modular equivalent of the latter essentially through the relationship described by Lemma~\ref{lemma:numkLogInFun}, using the fact that $\Gap[k]\Log = \num[k]\Log$ by the remarks following Definition~\ref{def:numkLog}.
Using this same equality, wherever Ref.~\cite{Sz99} considers ``acceptance gaps'', we may simply consider the number of accepting branches; however, Szelepc\'enyi's analysis is better motivated by the analogy to \Ceq\Log\ which is suggested by the alternative definition of \co\Mod[k]\Log\ via \Gap\Log\ functions.
Despite the fact that \Ceq\Log\ is also not known to be closed under complementation, Allender, Beals, and Ogihara~\cite{ABO99} show that a similar hierarchy of \Ceq\Log\ algorithms using nested \Ceq\Log\ oracles collapses to $\Log^{\Ceq\Log}$.
Ref.~\cite{Sz99} attempts to show that a similar analysis could be applied to \co\Mod[k]\Log.

We may sketch the main claim of Ref.~\cite{Sz99} as follows (see the original manuscript for details).
Following the proof structure of Ref.~\cite{ABO99} and making implicit use of the characterization of \co\Mod[k]\Log\ in Lemma~\ref{lemma:coMod-k-alt} (albeit expressed in terms of acceptance gaps), Ref.~\cite{Sz99} attempts to show that an $\NC^1(\co\Mod[k]\Log)$ circuit --- involving \textsc{and}, \textsc{or}, \textsc{not}, and $L$ gates for any $L \in \co\Mod[k]\Log$ --- could be simulated in $\Log^{\Mod[k]\Log}$ using the closure of \co\Mod[k]\Log\ under intersection and conjunctive truth-table reductions.
For a given circuit $C$ in a uniform circuit family $\{C_1,C_2,\ldots\} \in \NC^1(\co\Mod[k]\Log)$, one guesses nondeterministically at a set of gates which would produce the outcome `1', where each guess for a gate $g \in C$ is represented by a bit $x_g \in \{0,1\}$ produced during a depth-first traversal of the circuit.
The entire sequence of guesses is represented by a string $x \in \{0,1\}^{|C|}$, and is attributed a mass $m_x$ corresponding to the sum of $2^{d_g} x_g$ over all $g \in C$ (where $d_g$ is the depth of $g \in C$ measured from the output).
In the traversal, some of the gates are simulated by running a \co\Mod[k]\Log\ algorithm as characterized by Lemma~\ref{lemma:coMod-k-alt}, giving rise to some non-trivial number of accepting paths for the entire computation.
\begin{itemize}
\item
	If a gate in $C$ is guessed to have the value `1', it is simulated, essentially multiplying the number of accepting branches modulo $k$ by a \num[k]\Log\ function. 
	If this guess is incorrect (we say a ``false positive''), the number of accepting branches is multiplied by a number which has one or more prime factors in common with $k$; if the guess is correct the number of accepting branches modulo $k$ is unaffected.
\item
	If a gate in $C$ is guessed to have the value `0', it is skipped over in a traversal of the circuit, giving rise to no increase in the number of accepting branches.
	If this guess is incorrect (we say a ``false negative''), the mass $m_x$ is smaller than is might be.
\end{itemize}
One then sets up an optimization problem to try to find the string $x$ with the largest mass, subject to having no false positives, by preventing guesses $x$ with false positives from contributing to the number of accepting paths modulo $k$.
This optimization problem would then have a unique optimum $\bar x$ with mass $\bar m$, which consists of the correct guesses for every gate in $C$, and would be verifiable on a $\Log^{\Mod[k]\Log}$ machine.
This amounts to using a nondeterministic logspace machine $\mathbf U$ to compute functions, where ``incorrect answers'' may not have zero branches modulo $k$ (as with a \FUL[k] machine), but instead may have a number of branches which have prime divisors in common with $k$; and then using this machine as an oracle in a \co\Mod[k]\Log\ algorithm as characterized in Lemma~\ref{lemma:coMod-k-alt}.
Given the unique optimum, one could simply search for it in logarithmic space, allowing a $\Log^{\Mod[k]\Log}$ algorithm to simulate the $\NC^1(\co\Mod[k]\Log)$ machine.
If \Mod[k]\Log\ were then also closed under complements (in which case it would also be closed under arbitrary logspace truth-table reductions), the search for the optimum could be solved by a \Mod[k]\Log\ algorithm, showing that \Mod[k]\Log\ is closed under complements if and only if it is closed under oracles.

The $\Log^{\Mod[k]\Log}$ simulation of $\NC^1(\Mod[k]\Log)$ presented by Ref.~\cite{Sz99} seems to have a flaw, in that it is not clear that that several nondeterministic guesses which each contain false positives could not contribute to simulate the existence of an optimum with mass greater than $\bar m$ for the given verifying algorithm.
While several guesses $x\sur{1}, x\sur{2}, \ldots, x\sur{r}$ at the values of the gates may each have a number of accepting paths $\varphi\sur1, \varphi\sur2, \ldots, \varphi\sur{r}$ which have prime factors in common with $k$, the total number of accepting paths $\varphi\sur1 + \varphi\sur2 + \cdots + \varphi\sur{r}$ may be coprime to $k$, if no integer $\nu > 1$ divides all of the integers $\varphi\sur{j}$\,.

This construction could be repaired if $\chi_k \in \num[k]\Log$, where $\chi_k$ is the characteristic function of a \co\Mod[k]\Log-complete problem $L$, by simulating the nondeterministic machine $\mathbf U$ whose acceptance function is congruent to $\chi_k$ to simulate gates in a $\NC^1(\LCON[k])$ circuit.
Indeed, such a function --- or equivalently, a function $\tilde{\chi}_k$ such that $\tilde\chi_k(x) \equiv 0 \pmod{k}$ when $x \notin L$ and $\gcd(\tilde\chi_k(x),k) = 1$ otherwise --- would make the analogy to the collapse result of Ref.~\cite{ABO99} complete, as the former result depends on the fact that $\Z$ has no zero divisors, and such a function would allow the same simulation technique to avoid all of the zero divisors of $\Z/k\Z$ in the multiplication of branches in the simulation of \co\Mod[k]\Log\ oracles.
However, it is easy to show that $\chi_k \in \num[k]\Log$ if and only if $\chi_k \in \FUL[k]$.
If $\chi_k \in \FUL[k]$, we may simply compute $\chi_k$ and reject if we see that the outcome of the computation is $0$: thus $\chi_k \in \num[k]\Log$.
Conversely, if $\chi_k \in \num[k]\Log$, this means that the partial function $\varsigma_k: \LCON[k] \to \{1\}$ is in $\FUL[k]$; from this we may easily construct another \FUL[k] machine computing $\overline\varsigma_k: \overline{\LCON[k]} \to \{0\}$, and obtain a \FUL[k] machine computing $\chi_k$ by simulating machines computing $\varsigma_k$ and $\overline{\varsigma}_k$ in parallel.
As we saw in Section~\ref{sec:remarks}, $\chi_k \in \FUL[k]$ is already equivalent to \Mod[k]\Log\ being closed under oracles,
which has catastrophic consequences for the mod-logspace classes.

Thus, one might expect the main result of Ref.~\cite{Sz99} to be difficult to salvage, unless one could show that constructive interference of non-deterministic guesses which have false positives could be identified and discounted in the optimisation problem.
In particular, without a correct simulation of the $\NC^1(\Mod[k]\Log)$ circuit corresponding to the unique optimum of that optimisation problem, it is not clear that closure of \Mod[k]\Log\ under complementation is equivalent to closure under oracles.


\bibliographystyle{tocplain}
\bibliography{lcon-mod-k-cjtcs}

\end{document}